\theoremstyle{plain}
\newtheorem{theorem}{Theorem}[section]
\newtheorem{proposition}[theorem]{Proposition}
\theoremstyle{remark}
\newtheorem{definition}[theorem]{Definition}
\def\argmax{\mathop{\mathrm{arg\,max}}} 
\newcommand*\diff{\mathop{}\!\mathrm{d}}
\def\XS{\xspace}
\DeclareMathAlphabet{\mathb}{OML}{cmm}{b}{it}
\def\sbm#1{\ensuremath{\mathb{#1}}}                
\def\sbmm#1{\ensuremath{\boldsymbol{#1}}}          
\def\Nset{\mathbb{Z}_+} 
\def\Rset{\mathbb{R}} 
\def\Ab{{\sbm{A}}\XS}  \def\ab{{\sbm{a}}\XS}
\def\Db{{\sbm{D}}\XS}  \def\db{{\sbm{d}}\XS}
\def\Eb{{\sbm{E}}\XS}
\def\Ib{{\sbm{I}}\XS}  
\def\Jb{{\sbm{J}}\XS}
\def\Qb{{\sbm{Q}}\XS}  
  \def\rb{{\sbm{r}}\XS}
\def\Sb{{\sbm{S}}\XS}  \def\sb{{\sbm{s}}\XS}
\def\Ub{{\sbm{U}}\XS}  
\def\Vb{{\sbm{V}}\XS}
\def\thetab      {{\sbmm{\theta}}\XS}      \def\Thetab    {{\sbmm{\Theta}}\XS}
\def\mub         {{\sbmm{\mu}}\XS}
      \def\Sigmab    {{\sbmm{\Sigma}}\XS}
\def\phib        {{\sbmm{\phi}}\XS}
\newcommand{\tv}{\ensuremath{\mathbf{t}}} 
\newcommand{\Xv}{\ensuremath{\mathbf{X}}} 
\newcommand{\xv}{\ensuremath{\mathbf{x}}} 
\newcommand{\Yv}{\ensuremath{\mathbf{Y}}} 
\newcommand{\Zv}{\ensuremath{\mathbf{Z}}} 
\newcommand{\yv}{\ensuremath{\mathbf{y}}} 
\newcommand{\zv}{\ensuremath{\mathbf{z}}} 
\newcommand{\tun}{T\textsubscript{1}} 
\newcommand{\tdeux}{T\textsubscript{2}}
\begin{document}

\begin{frontmatter}
\title{Scalable magnetic resonance fingerprinting: Incremental inference of high-dimensional elliptical mixtures  from  large data volumes}
\runtitle{Scalable magnetic resonance fingerprinting}

\begin{aug}
\author[A,B,C]{\fnms{Geoffroy}~\snm{Oudoumanessah}\ead[label=e1]{geoffroy.oudoumanessah@inria.fr}\orcid{0000-0002-2240-1169}},
\author[B]{\fnms{Thomas}~\snm{Coudert}\ead[label=e2]{thomas.coudert@inserm.fr}\orcid{0009-0006-0158-6105}}
\author[C]{\fnms{Carole}~\snm{Lartizien}\ead[label=e3]{carole.lartizien@creatis.insa-lyon.fr}\orcid{0000-0001-7594-4231}}
\author[A,B]{\fnms{Michel}~\snm{Dojat}\ead[label=e4]{michel.dojat@inserm.fr}\orcid{0000-0003-2747-6845}}
\author[B]{\fnms{Thomas}~\snm{Christen}\ead[label=e5]{thomas.christen@univ-grenoble-alpes.fr}\orcid{0000-0002-0498-9296}}
\and
\author[A]{\fnms{Florence}~\snm{Forbes}\ead[label=e6]{florence.forbes@inria.fr}\orcid{0000-0003-3639-0226}}

\address[A]{Univ. Grenoble Alpes, Inria, CNRS, Grenoble INP, LJK, 38000 Grenoble, France \printead[presep={,\ }]{e1,e6}}

\address[C]{Univ. Lyon, CNRS, Inserm, INSA Lyon, UCBL, CREATIS, UMR5220, U1294, F‐69621, Villeurbanne, France\printead[presep={,\ }]{e3}}

\address[B]{Univ. Grenoble Alpes, Inserm U1216, CHU Grenoble Alpes, Grenoble Institut des Neurosciences, 38000 Grenoble, France\printead[presep={,\ }]{e2,e4,e5}}
\end{aug}

\begin{abstract}
Magnetic Resonance Fingerprinting (MRF) is an emerging technology with the potential to revolutionize radiology and medical diagnostics. In comparison to  traditional magnetic resonance imaging (MRI), MRF enables the rapid, simultaneous, non-invasive acquisition and reconstruction of multiple  tissue parameters, paving the way for novel diagnostic techniques. 
In the original {\it matching} approach, reconstruction is based on the search for the best matches between \textit{in vivo} acquired  signals
and a dictionary of high-dimensional simulated signals  (fingerprints) with known tissue properties.
A critical and limiting challenge is that the size of the simulated dictionary increases exponentially with the number of parameters, leading to an extremely costly matching.  
In this work, we propose to address this scalability issue by considering probabilistic mixtures of high-dimensional elliptical distributions, to learn more efficient dictionary representations. Mixture components are modelled  as flexible elliptical shapes in low dimensional subspaces. They are exploited to cluster similar signals and reduce their dimension locally cluster-wise limiting information loss.  To estimate such a mixture model, we  provide a new incremental algorithm capable of handling large numbers of signals, allowing us to go far beyond the hardware limitations encountered by standard implementations.
We demonstrate, on simulated and real data, that our method  effectively manages large volumes of MRF data with maintained accuracy.  It offers a more efficient solution for accurate tissue characterization and  significantly reduces the computational burden, making the clinical application of MRF more practical and accessible.
\end{abstract}

\begin{keyword}
\kwd{Dimension reduction}
\kwd{Clustering}
\kwd{Incremental learning}
\kwd{High dimensional mixture models}
\kwd{Elliptical distributions}
\kwd{Expectation Maximization algorithm}
\end{keyword}
 
\end{frontmatter}

\section{Introduction}
Traditional Magnetic Resonance (MR) imaging relies on an analytical resolution of dynamical equations using conventional tuning of the MR hardware through sequences of pulses, each characterized by different values of parameters such as the flip angle and repetition time.  Standard quantitative MRI (qMRI) methods are based on a single sequence for a single parameter measurement at a time. This leads to high scan times for multi-parametric protocols as each parameter estimate involves one MR sequence. A recent approach named Magnetic Resonance Fingerprinting (MRF, \cite{ma2013}) has been developed to overcome these limitations. The MRF protocol involves fast undersampled acquisitions with time-varying parameters defining the MRF sequence that produces temporal signal evolutions (named {\it fingerprints}) in each voxel. In the original proposal, a dictionary search approach is used to compare the \textit{in vivo} fingerprints with millions of numerical simulations of MR signals for which the associated parameters are known. These millions of simulated signals compose the so-called dictionary. The values of the parameters corresponding to the closest simulated signals or {\it matches} are then assigned to the associated {\it in vivo} voxels, allowing the simultaneous reconstruction of multiple quantitative maps (images) from extremely undersampled raw images, using only one single sequence, thus saving considerable acquisition time \citep{poorman2020,mcgivney2020}. From this, standard relaxometry MRF allows reconstructing parameter maps for relaxation times T\textsubscript{1} and T\textsubscript{2}, over the whole human brain (1 mm$^3$ spatial resolution) in 3 min \citep{ye2017, gu2018} compared to 30 min for a standard T1/T2 exam. Moreover, the flexibility of the numerical simulations enables correction of system imperfections as well as some patient motions by including them in the model and post-processing pipelines \citep{bipin2019}. Thus,  MRF could be a game changer for emergency  patients who need to complete exams in a few minutes.  The power of the MRF approach is not limited to the estimation of relaxation times, in theory, it allows the measurement of any parameter that influences nuclear magnetization ({\it e.g.}, microvascular networks), and could be added to the simulation model \citep{Wang2019,coudert2024,coudert2025mr}. However, increasing the number of estimated parameters, even moderately,  induces the design of more complex sequences and increased reconstruction times, from hours to days. This limits the clinical application of high-dimensional MRF and necessitates the development of innovative processing methods. 
Consequently, a significant focus is on improving MRF reconstruction methods, as reviewed by \cite{tippareddy2021} and \cite{monga2024}.
In this work, we propose to focus on reducing the reconstruction times of MRF when more than the main two parameters, $T_1$, $T_2$, are involved, including the addition of $\delta \! f$ the frequency offset, the sensitivity of the magnetic field $B_1$, cerebral blood volume ($\text{CBV}$), and microvascular geometry (e.g., vessel radius denoted as $R$).

MRF reconstruction is first recast as an inverse problem that can be solved using different approaches as recalled in the next section.
All approaches make use, at some stage, of a dictionary of simulated pairs (parameters, signal), which represent our knowledge of the link between tissue parameters and MR time series, through a so-called {\it direct} or {\it forward} model. The dictionary is then either used to learn an {\it inversion} operator, from signal to tissue parameters,  or to search for the best fits between observed signals and simulated ones. The approaches' scalability relies thus greatly on their ability to extract efficiently the information encoded in simulations. Efficiency has different aspects: for search-type methods, the dictionary should not be too big, while for learning-based methods, the dictionary should be informative enough. These potentially opposite requirements call for efficient representations of simulated data.
We propose to explore a divide-and-conquer strategy by introducing the framework of High Dimensional Mixtures of Elliptical Distributions (HD-MED).
Probabilistic mixtures of high-dimensional elliptical distributions allow us to learn more efficient dictionary representations. Mixture components are modeled as flexible elliptical shapes in low-dimensional subspaces. They are exploited to cluster similar signals and reduce their dimension locally, at the cluster level, limiting information loss. To estimate such a mixture model, we provide a new incremental algorithm capable of handling large numbers of signals, allowing us to go far beyond the hardware limitations encountered with standard implementations.
We demonstrate, on simulated and real data, that our method effectively manages large volumes of MRF data with maintained accuracy. It offers a more efficient solution for accurate tissue characterization and significantly reduces computational burden, making the clinical application of MRF more practical and accessible.
  
In the rest of the paper, we first recall in Section \ref{sec:IP} the two main types of approaches that have been investigated in the literature on MRF reconstruction. We then review related work on efficient dictionary representations and specify our contributions in Section \ref{sec:RW}. The mixtures of elliptical distributions are presented in Section \ref{sec:Mix} with their use for dimension reduction in Section \ref{Sec:Red}. The incremental algorithm proposed for the mixture estimation is presented in Section \ref{sec:oem} and illustrated on MRF in Section \ref{sec:MRF}.

\section{MRF reconstruction as an inverse problem}
\label{sec:IP}
An inverse problem refers to a situation where one aims to determine the causes of a phenomenon based on experimental observations. 
In MRF, the goal is to infer a set of tissue characteristics or parameter values from an observed signal, identifying those that best explain the signal. Compared to standard MRI, which relies on sequential measurements with fixed parameters, MRF acquires data using a sequence of pseudo-random parameters ({\it e.g.}, varying repetition times or flip angles). This process generates tissue-dependent time-series signals, where each time-series $\yv$ - known as a {\it fingerprint} - is assigned to a specific 3D location (voxel) in the brain.
Such a resolution generally starts with modeling the phenomenon under consideration, which is called the "direct" or "forward problem."
It is generally assumed that at least the numerical evaluation of the forward model is available because experts have designed equations that can be solved either analytically or numerically.
The most common use of the forward model is via a simulator that allows the creation of a database ${\cal D}_f$, usually referred to as a dictionary, of $N$ signals ${\yv_1, \ldots, \yv_N}$ with $\yv_i \in {\cal Y} \subset \Rset^M$, generated (stored or computed on the fly) by running the theoretical (physical) model $f$ for many different tissue parameter values ${\tv_1, \ldots, \tv_N}$ with $\tv_i \in {\cal T} \subset \Rset^L$ and $\yv_i = f(\tv_i)$. The typical dimension of each $\yv_i$ considered in this paper is $M=260$.
The generated tissue parameter values then only partially represent the full space ${\cal T}$ of possible values and correspond to a discrete grid in the space. In this context, we can distinguish two types of methods, referred to below as optimization and learning approaches.

\subsection{Optimization vs learning (or regression) approaches}

Optimization approaches include the most used method in MRF, namely dictionary matching, often called, in other domains, grid search, look-up-table or k-nearest neighbors. 
They consist of minimizing over parameters  $\tv$ a
merit function  $d$ expressing the similarity between the observed signal $\yv_{obs}$ and simulated
signal $f(\tv)$,
\begin{eqnarray}  \hat{\tv} &\in &  \arg \min_{\tv \in {\cal T}} d(\yv_{obs}, f, \tv) \label{modelM} \; .
\end{eqnarray}
Typically, $d(\yv_{obs}, f, \tv) = d(\yv_{obs}, f(\tv))= ||\yv_{obs}- f(\tv)||^2\; .$
Solutions are searched in the full ${\cal T}$ space but solutions could also be penalized as done in grid search methods. In grid search, 
the previous full search is replaced by a simpler look-up or matching operation making use of  ${\cal D}_f$, often created beforehand. The search space is significantly reduced from  a continuous space ${\cal T}$ to a discrete and finite ${\cal D}_f$. 
 The speed gain is significant in comparison to traditional optimization methods as retrieving a value from memory is often faster than undergoing an expensive usually iterative computation. 
Their disadvantage  is the instability of solutions.
Many questions  remain on how to choose the merit function, how many $\yv_n$ in
the look-up table have to be kept to estimate parameters, how to choose the look-up
table, {\it etc. }
When the number of parameters is small, grid search is suitable and can provide very good predictions. However, for even moderate numbers of parameters, the required number of elements in the dictionary renders grid search either intractable or inaccurate. The technique is not amortized, for each new $\yv_{obs}$, we have to compute the matching score $d(\yv_{obs}, \yv_n)$ for all $\yv_n$ in the dictionary.  When the dimension of $\tv$ increases, the dictionary (and $N$) has to  be larger too for better accuracy. The computation of $N$ matching scores can become time-consuming. 

 Regression or learning methods are more efficient in that sense and  usually have better amortization properties.
 These methods have the advantage of  providing tractable solutions in the case of massive inversions of high-dimensional data.
The main principle is to transfer the computational cost and time from individual pointwise predictions to the learning of a global inverse operator from ${\cal D}_f$.
The advantage is that once the operator is learned, it can be used, at negligible cost, for very large numbers of new signals. Then, the nature of the inverse operator needs to be specified.
Traditional learning or regression methods are not specifically designed for high-dimensional data but there has been a large literature covering this case, see {\it e.g.} \cite{Giraud2014} for a review. 
In MRF, a popular approach from \cite{cohen2018}, uses a four-layer fully connected network (DRONE) to learn the dictionary signals for reconstructing  $T_1$ and $T_2$ parameters. Eventhough DRONE provides good results, this perceptron-based model loses the temporal coherence of the signal. Recently, \cite{cabini2024} proposed a Recurrent Neural Network (RNN) with long-short term memory (LSTM) blocks, which yields better reconstruction results for $T_1$ and $T_2$ with more robustness towards noisy acquisitions. However, when more parameters need to be estimated, such as vessel oxygenation or radius \citep{christen2013, christen2014,coudert2025mr}, the dimensionality of the signals dramatically increases. \cite{barrier2024} demonstrated that a simple RNN is prone to catastrophic forgetting, where the RNN well estimates the beginning of the signals but learns the end less effectively. To mitigate this issue, they proposed to use bidirectional LSTM (bi-LSTM) blocks within the RNN architecture, ensuring that both the beginning and the end of the signal are efficiently learned by the new bi-LSTM blocks. Despite the good results in reconstructing $T_1$, $T_2$, and vascular parameters, bi-LSTM still faces challenges in learning the  middle
part of the signals.

Generally, the dictionary is seen as a collection of simulated signals with no particular spatial correlation. Another way to simulate signals is to acquire real parametric maps from different subjects and then simulate the MRF images. This approach was first proposed by \cite{soyak2021} and later improved by \cite{gu2024}. Both studies utilize a UNet \citep{ronneberger2015} to infer $T_1$ and $T_2$ directly using the entire MRF image as an input to the network, preserving valuable spatial information. Given the high-dimensionality of the signals, the authors proposed adding attention layers \citep{vaswani2017} to focus on the most important dimensions of the signals. More recently, \cite{li2024} highlighted the limitations of using CNNs, which have a restricted receptive field and capture spatial information only locally. To overcome these limitations, the authors proposed using a Local-Global vision Transformer to capture spatial information globally as well. However, capturing spatial information has a cost. Indeed, one needs to acquire, from a large group of subjects, multiple $T_1$ and $T_2$ maps, which takes about 30 min for a complete exam, making the data acquisition process much more costly than using more conventional dictionaries. Additionally, since $T_1$ and $T_2$ need to be acquired at two different times, this method introduces more errors from the registration of the acquired maps.

Finally, \cite{boux2021} also proposed to use  GLLiM \citep{Deleforge2015}, a model that casts MRF reconstruction into a Bayesian inverse problem and then solves it using a learning approach. This method takes into account the high-dimensional property of MRF signals by defining the low-dimensional variables as the regressors, which in our case are the tissue parameters. By doing so, they start learning the \textit{low-to-high} regression model from which they can derive the forward model parameters and then the \textit{high-to-low} regression model,  from MRF signals to tissue parameters, as desired.

\section{Related work and positioning}
\label{sec:RW}

In practice, acquired MRF acquisitions come as a 4D matrix, made of a time series of 3D MRI images where each voxel contains the acquired, potentially long, fingerprint signal. In this work, we focus on designing efficient representations of large highly precise grids of simulated signals counterparts.

\subsection{Parsimonious representations of dictionaries}
The curse of dimensionality goes with what is often called the bless of dimensionality, which refers to the fact that in high-dimensional data sets, useful information actually lives in much smaller-dimensional parts of the data space. 
An approach to the MRF reconstruction problem is then to reduce the dimension of the dictionary beforehand to reduce the matching or learning cost. The first to propose such an approach for efficient matching were \cite{mcgivney2014}, who applied Singular Value Decomposition (SVD) to the dictionary of signals. Once the decomposition is learned, one can project any new  acquisition into the SVD low-dimensional subspace. However, when the size of the dictionary increases, computing the SVD becomes costly as it requires loading the complete dictionary into fast-access memory (\textit{e.g.} the RAM). To address this, \cite{yang2018} proposed using randomized SVD \citep{halko2011}. \cite{golbabaee2019} also suggested applying SVD to the dictionary before training a neural network. Other methods proposed a non-Euclidean analysis of the dictionary space projecting the signals into a lower-dimensional manifold \citep{li2023}.
Reducing the dimension of high-dimensional dictionaries assumes that most of the information in the signals can be captured and represented in a much lower-dimensional subspace. Classical techniques include principal component analysis (PCA, \cite{jolliffe2016}), probabilistic principal component analysis (PPCA, \cite{tipping1999ppca}), factor analyzers (FA) \citep{Freitas2004}, sparse models \citep{Zou2006,daspremont2007,archambeau2008a}, and newer methods such as diffusion maps \citep{coifman2006}. More flexible approaches are based on mixtures of the previous ones, such as mixtures of factor analyzers (MFA, \cite{mclachlan2003}) introduced by \cite{gharamanietal97,gharamanietal99} and extended by \cite{mclachlan2003,baek2011mixtures}, and mixtures of PPCA (MPPCA, \cite{tipping1999mppca,xu2023}) with recent generalizations \citep{hong2023,xu2023}. Another mixture approach is called HDDC in \cite{bouveyron2007} and HD-GMM in \cite{bouveyron2014} for High Dimensional Gaussian Mixture Models, which encompass many forms of MFA and MPPCA and generalize them. In particular, HD-GMM can be used to obtain multiple low-dimensional subspaces of different dimensions. For a review on high-dimensional clustering via mixtures, see \cite{bouveyron2014}.
Such a divide-and-conquer strategy has been used in MRF by first creating a global clustering of the signals and then reducing the dimension. This method was applied by \cite{cauley2015}, who performed K-way partitioning of the dictionary before using multiple cluster-associated PCA to reduce the dimension in each part. When matching a new signal, one first determines which cluster it belongs to and then applies the projection learned by the associated PCA model. More recently, \cite{ullah2023} proposed a simpler approach: applying a clustering algorithm to the dictionary, before utilizing GPUs for the \textit{dictionary matching}, enabling fast matching without the need for dimensionality reduction. However, this method still faces issues when the number of parameters to estimate is larger than $3$.

Most of these methods are designed for batch data and are thus sensitive to hardware limits such as memory, restricting the amount of data they can process. For instance, some dictionaries exceed terabyte (TB) sizes. A simple solution is to down-sample data sets before processing, potentially losing useful information. Another approach is to design incremental, also referred to as online, variants that handle data sequentially in smaller groups. 
Although such approaches can be used for genuine streaming data that is not stored after being seen, in this work we consider cases where past samples can possibly be revisited.
A number of incremental approaches exist for dimension reduction techniques, see the recent SHASTA-PCA \citep{gilman2023bis} and references therein, or \cite{balzano2018} for a review. To our knowledge, far fewer solutions exist for mixtures. Estimation of such models is generally based on maximum likelihood estimation via the Expectation-Maximization (EM) algorithm (\cite{mcLachlan08}). A preliminary attempt for an incremental MPPCA can be found in \cite{bellas2013} but it is based on heuristic approximations of the EM steps.

\subsection{Contribution}

Herein, we propose to explore this divide-and-conquer strategy by introducing the framework of High Dimensional Mixtures of Elliptical Distributions (HD-MED) to overcome the challenge of estimating more than $2$ parameters, {\it e.g.} $T_1$, $T_2$, $\delta \! f$, $B_1$, $\text{CBV}$, $R$.
Considering the reconstruction of 6 parameters makes the associated dictionaries both large in size (an order of $13$ terabytes of signals to represent the tissue heterogeneity) and dimension, framing our proposal around two key components. Building on HD-MED, a generalization of HD-GMM, we show how they can be used to simultaneously compress and cluster large-scale high-dimensional MRF dictionaries, and more generally any dataset. We derive a new online algorithm, based on a principled EM framework, to learn such a model from  very large data volumes. We demonstrate the effectiveness of our approach on MRF reconstruction, showing results comparable to the high-dimensional dictionary matching referred to as \textit{full-matching} in the next sections. This approach allows us to exceed the resolution, size used in current implementations and to reconstruct a larger number of MR parameter maps with improved accuracy, thereby advancing the clinical feasibility of MRF. 

\section{High Dimensional Mixtures of Elliptical Distributions}
\label{sec:Mix}
Due to time constraints and computational complexity, data may be acquired with errors and at high noise levels. In such cases, using Gaussian models, which are known to be sensitive to outliers, is not recommended.  Elliptical distributions represent a family of distributions that contains Gaussian distributions but also heavy-tailed distributions, such as the Student or Laplace distributions. 
Elliptical distributions allow more flexible modeling, with additional parameters and better fit to data than Gaussian distributions, see {\it e.g.} \citep{Lange1989,Lange1993,Neri2021,Kotz2001}. This additional flexibility is often referred to as robustness because the heavier tails make estimation less sensitive to outlying samples.
 In this paper, we consider a sub-class of elliptical distributions, which can be expressed as infinite mixtures of Gaussian distributions.

\subsection{Gaussian scale mixture (GSM) distributions}
Scale mixtures of multivariate Gaussian distributions, also referred to as normal/independent distributions \citep{Lange1993}, are an  important subclass of elliptical distributions whose definition is recalled below. Scale mixtures of Gaussians share good properties with Gaussian distributions. They are tractable, lead to tractable inference procedures and provide more robust results, in contrast to Gaussian distributions that usually suffer from sensitivity to outliers.  
\begin{definition}[Elliptical Distributions (ED)]
A continuous random vector $ \Yv \in \mathbb{R}^M$ follows a multivariate elliptical symmetric distribution if its probability density function (pdf) $p(\yv)$ is of the following form (see \citet{cambanis} or \citet{kelker}),
\begin{align}
    p(\yv)= C_{p,g} |{\Sigmab}|^{-1/2} g\left((\yv-\mub)^{\top}{\Sigmab}^{-1}(\yv-\mub)\right),  \label{eq:ed}
\end{align}
where $\Sigmab \in \mathbb{R}^{M \times M}$ is the scale matrix with determinant $|\Sigmab|$, $\mub \in \mathbb{R}^M$ is the location or mean vector, $C_{p,g}$ is a normalizing constant such that the pdf $p(\yv)$ integrates to one. The non-negative function $g$ is called the density generator and determines the shape of the pdf. When $\Yv$ has density (\ref{eq:ed}),  we write $\Yv \sim \mathcal{E}_M\left(\mub, \Sigmab, g\right)$.
\end{definition}
Note that the scale matrix $\Sigmab$ is not necessarily the covariance matrix, $\Sigmab$ is proportional to the covariance matrix if the latter exists. 
The pdf of the multivariate normal distribution is a special case of  ED with $g(u) = \exp(-u^2/2)$. 
Another member of the elliptical family is the multivariate 
Student distribution. This distribution is well-studied in the literature \citep{kotz2004multivariate} and admits a useful representation as a Gaussian scale mixture.
Denoting ${\cal N}_M(\yv; \mub,\Sigmab)$ a $M$-variate Gaussian distribution with mean $\mub$ and covariance matrix $\Sigmab$, 
a Gaussian scale mixture distribution is a distribution of the following form.

\begin{definition}[Gaussian scale mixture  distributions (GSM)]
\label{def:GSM}
If $\mub$ is an $M$-dimensional vector, $\Sigmab$ is an $M \times M$ positive definite symmetric matrix and $f$ is
a  pdf of a univariate positive variable $W\in \Rset^+$, then the $M$-dimensional density  given by
 \begin{align}
p(\yv) = \int_{\Rset^+} {\cal N}_M\left(\yv ;  \mub, \frac{\Sigmab}{w}\right)  \; f(w)  \; dw \label{eq:GSM}
\end{align}
is said to be an infinite mixture of scaled
Gaussians or Gaussian scale mixture (GSM) with mixing distribution function $f$. If vector $\Yv$ has density (\ref{eq:GSM}), we  still 
write $\Yv \sim {\cal E}_M\left(\mub, \Sigmab,f \right)$ and refer to $W$ as the mixing variable.
\end{definition}
In practice, we will consider mixing distribution $f_\thetab$ that depends on some parameter $\thetab$ and also write $\Yv \sim {\cal E}_M\left(\mub, \Sigmab,\thetab \right)$.  
As already mentioned, famous GSM distributions include the multivariate Student distribution  (when $f_\thetab$ is the pdf of a chi-squared variable), the Pearson type VII distribution (when $f_\thetab$ is a gamma distribution), the generalized Gaussian (when a power of $W$ follows the gamma distribution) 
and the Laplace distribution (when $W^{-1}$ follows an exponential distribution).
It is straightforward to see that GSM are elliptical distributions. However, not all elliptical distributions can be reduced to scale mixtures. For the previous reason, characterization and a way to represent elliptical distributions as GSM are very valuable. 
In \citet{Gomez2006}, conditions are given under which elliptical distributions are GSMs. The issue of finding the corresponding mixing distribution is also addressed. An illustration of these results for generalized Gaussian distributions is given  by \cite{Gomez2008}.

\subsection{Mixtures of High Dimensional GSM}

We consider  finite mixtures of $K$ GSM distributions, with a specific parameterization  to handle potentially high-dimensional observations.
\begin{definition}[Finite mixture]
\label{def:mixture_model}
 $\Yv$ follows a  finite mixture model if its pdf writes as
\begin{equation}
    p(\yv) = \sum \limits_{k=1}^K \pi_k f_k(\yv),
\end{equation}
with $\pi_k \in [0,1]$  the mixing weights summing to one and $f_k$ the pdf of the $k^{\text{th}}$ component.
\end{definition}
A  mixture model where each  $f_k$ is a GSM~\eqref{eq:GSM},
${\cal E}(\mub_k, \Sigmab_k, \thetab_k)$, is referred to as a mixture of ED (MED).  The number of parameters in MED grows quadratically with the dimension $M$ due to the scale matrices $\Sigmab_k$. For large $M$, this can be problematic for estimation. In the Gaussian case,  to reduce the number of parameters, \cite{bouveyron2007} proposed a family of parsimonious Gaussian mixtures, using the eigenvalue decomposition of the covariance matrices. We extend this idea  to  MED by reparameterizing  $\Sigmab_k$ as follows,
\begin{equation}
    \label{eq:repScale}
    \Sigmab_k = \Db_k \Ab_k \Db_k^T,
\end{equation}
where $\Db_k$ is an $M \times M$ orthogonal matrix, which contains the eigenvectors of $\Sigmab_k$ and $\Ab_k$ is an $M \times M$ diagonal matrix, which contains the associated eigenvalues in decreasing order. The key idea of \cite{bouveyron2007} is to consider that each cluster lies in a low-dimensional subspace of dimension $d_k < M$, which can be expressed by assuming that 
\begin{equation}
    \label{eq:decompA}
    \Ab_k = diag (a_{k1}, \ldots, a_{kd_k}, b_k, \ldots, b_k),
\end{equation}
where $a_{k1}, \ldots, a_{kd_k}$ are the $d_k$ largest eigenvalues of $\Sigmab_k$ and $b_k$ is a small negligible value. The  $d_k$ eigenvectors associated to the first $d_k$ eigenvalues $\{a_{k1}, \ldots, a_{kd_k}\}$ define 
a cluster-specific subspace $\mathbb{E}_k$, which 
captures the main cluster shape. The orthogonal subspace is denoted by  
$\mathbb{E}_{k}^{\bot}$.
Let $\widetilde{\Db}_k$ consists of the $d_k$ first columns of $\Db_{k}$ supplemented by $(M-d_k)$ zero columns and $\overline{\Db}_k = (\Db_{k} - \widetilde{\Db}_k)$. It follows that $P_k(\yv) = \widetilde{\Db}_k \widetilde{\Db}_k^T (\yv - \mub_k) + \mub_k$ and $P_{k}^{\bot}(\yv) = \overline{\Db}_k \overline{\Db}_k^T (\yv - \mub_k) + \mub_k$ are the projections of $\yv$ on $\mathbb{E}_k$ and $\mathbb{E}_{k}^{\bot}$ respectively. 
This parameterization allows us to handle high-dimensions in an efficient way. For instance, the quadratic form of the Mahalanobis distance, appearing in the generator $g$ in (\ref{eq:ed}), writes
\begin{eqnarray}
 (\yv - \mub_k) \Db_k \Ab_k^{-1}  \Db_k^T (\yv - \mub_k) 
&=&  (\yv-\mub_k)^T\widetilde{\Db}_k\Ab_k^{-1}\widetilde{\Db}_k^T(\yv-\mub_k) \nonumber \\ & & + (\yv-\mub_k)^T\overline{\Db}_k\Ab_k^{-1}\overline{\Db}_k^T(\yv-\mub_k)   \nonumber \\
& = & \| \mub_k - P_k(\yv)\| _{\widetilde{\Sigmab}_k^{-1}}^{2} +\frac{1}{b_k} \| \yv - P_k(\yv)\|^{2},
\label{eqn:proj}
\end{eqnarray}
where $\|.\|^{2}_{\widetilde{\Sigmab}_{k}^{-1}}$ is the norm defined by $\|\yv\|^{2}_{\widetilde{\Sigmab}^{-1}_{k}} = \yv^{T}\widetilde{\Sigmab}^{-1}_k\yv$ with $\widetilde{\Sigmab}_k^{-1} = \widetilde{\Db}_k\Ab_k^{-1}\widetilde{\Db}_k^T$. Equation~\eqref{eqn:proj} uses the definitions of $P_k$ and $P_k^{\bot}$
and  $\|\mub_k-P_k^{\bot}(\yv)\|^2 = \|\yv-P_k(\yv)\|^2$. The gain comes from the fact that \eqref{eqn:proj} does not depend on $P_k^{\bot}$ and thus does not require the computation of the $(M-d_k)$ latest columns of  $\Db_k$,  the eigenvectors associated to the smallest eigenvalues.   Similarly, determinants can be efficiently computed as  $\text{log}(|\Sigmab_k|) = (\sum_{m=1}^{d_k} \text{log}(a_{km}))+(M-d)\text{log}(b_k)$. 
This parameterization that now only depends on  $\widetilde{\Db}_k$ and not on the complete  $\Db_k$, is indicated by denoting the corresponding ED as 
$\mathcal{HE}_{Md_k} \left(\mub_k, \widetilde{\Db}^*_k, \ab_k, b_k, \thetab_k \right)$, where $\widetilde{\Db}^*_k$ is  matrix $\widetilde{\Db}_k$ with the last zero $M-d_k$ columns omitted.
We then refer to a MED coupled with this  parameterization  as a high-dimensional MED (HD-MED).

\begin{definition}[HD-MED]
A random vector $\Yv \in \Rset^M$ follows a HD-MED distribution  if for all $k\in[1:K]$, the  pdf of the $k^{\text{th}}$ component $f_k$ is an ED $\mathcal{HE}_{\!Md_k} \!\!\left(\mub_k, \widetilde{\Db}^*_k, \ab_k, b_k, \thetab_k \right)$ with reparameterization given by \eqref{eq:repScale} and \eqref{eq:decompA}. 
    We denote 
    
     \noindent   $\Yv \sim \mathcal{MHE}_{M\db} \left((\pi_k,\mub_k, \widetilde{\Db}^*_k, \ab_k, b_k,\thetab_k)_{k=1}^{K}\right),$
    with $\db = \left(d_1, \dots, d_k\right)$, $\ab_k = \left(a_{k1}, \dots, a_{kd_k}\right)$. 
\end{definition}

\section{Dimension reduction with HD-MED}
\label{Sec:Red}

\subsection{Latent variable dimension reduction}

Standard PCA is defined without referring to a probabilistic model. 
Given a set of observations in $\Rset^M$, their $M \times M$ empirical covariance matrix is decomposed into eigenvalues and eigenvectors.
For any  $\yv \in \Rset^M$, a lower dimensional ($d<< M$) representation can then be obtained by considering  its projection to a lower dimensional subspace
$\widehat{\yv} = \hat{\Sigmab}_d^T\yv$ where $\widehat{\Sigmab}_d$ is the matrix containing the $d$ first eigenvectors of the empirical covariance matrix. Its reconstruction in $\Rset^M$, optimal in the sense of the squared reconstruction error, can be obtained with $\widetilde{\yv} = \widehat{\Sigmab}_d\widehat{\yv}$. 
Alternatively, if $\yv$ is assumed to be a realization of a random vector $\Yv \sim  \mathcal{HE}_{Md} \left(\mub, \widetilde{\Db}^*, \ab, b, \thetab\right)$, a low dimensional representation of $\yv$ can be justified using the following latent variable model representation of $\Yv$.
The following proposition generalizes the developments of \cite{archambeau2006,Guo2023} for the Student distribution.

\begin{proposition}[HD-ED latent variable model]
\label{prop:gen_model}
Let $d \leq M-1$, $\Yv \in \Rset^M$,  $\Xv \in \Rset^d$, $\Eb \in \Rset^M$, $W \in \Rset^+$ be random variables,  $\Vb \in \Rset^{M \times d}$ a matrix of linearly independent columns, $\mub \in \Rset^M$ a vector and $f_\thetab$ the pdf of a positive univariate random variable defined by some parameter $\thetab$. 
Assume that
$\Yv = \Vb \Xv + \mub + \Eb$, 
$(\Xv | W=w) \sim {\cal N}(\mathbf{0}_d, w^{-1} \Ib_d)$,
$(\Eb | W=w) \sim {\cal N}(\mathbf{0}_M, b \; w^{-1}\Ib_M)$ and
$W \sim f_\thetab$, 
then, $\Yv \sim \mathcal{HE}_{Md} \left(\mub, \widetilde{\Db}^*, \ab, b, \thetab \right),$ 
with $\Db \Ab \Db^T= b \Ib_M + \Vb \Vb^T$ the eigenvalue decomposition of $b \Ib_M + \Vb \Vb^T$, $\Ab = \operatorname{diag}(a_1, \dots, a_{d}, b, \dots, b)$ the ordered eigenvalues   and $\widetilde{\Db}^*$ the matrix containing the first $d$ eigenvectors as columns.

Additionally, denoting by $\Ub = b \Ib_d + \Vb^T \Vb $, we have, 
\begin{align}
(\Xv | \Yv=\yv, W=w) &\sim {\cal N}( \Ub^{-1}\Vb^T(\yv - \mub), w^{-1} b \Ub^{-1}) \label{eq:condX} \; .
\end{align}
It follows that $\mathbb{E}[\Yv | \Xv=\xv ] =  \Vb\xv +\mub$ and 
$\mathbb{E}[\Xv | \Yv=\yv ] =  \Ub^{-1}\Vb^T(\yv - \mub)\; .$
\end{proposition}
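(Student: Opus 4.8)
The plan is to exploit the hierarchical structure of the model: conditionally on the mixing variable $W$ everything is Gaussian, so the three claims reduce to elementary Gaussian algebra, after which $W$ is integrated out using Definition~\ref{def:GSM}. First I would fix $w>0$ and observe that, since $(\Xv\mid W=w)$ and $(\Eb\mid W=w)$ are independent Gaussians (conditional independence of $\Xv$ and $\Eb$ given $W$ being understood in the model), the affine relation $\Yv=\Vb\Xv+\mub+\Eb$ makes $(\Yv\mid W=w)$ Gaussian with mean $\Vb\,\mathbf{0}_d+\mub+\mathbf{0}_M=\mub$ and covariance $\Vb(w^{-1}\Ib_d)\Vb^T+b\,w^{-1}\Ib_M=w^{-1}(b\Ib_M+\Vb\Vb^T)$. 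Integrating against $f_\thetab$ then gives $p(\yv)=\int_{\Rset^+}\mathcal{N}_M\big(\yv;\mub,w^{-1}(b\Ib_M+\Vb\Vb^T)\big)\,f_\thetab(w)\,\diff w$, which by Definition~\ref{def:GSM} is precisely the pdf of $\mathcal{E}_M(\mub,\,b\Ib_M+\Vb\Vb^T,\,\thetab)$.

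Next I would identify the eigenstructure of $\Sigmab:=b\Ib_M+\Vb\Vb^T$. Because $\Vb$ has $d$ linearly independent columns, $\Vb\Vb^T$ is symmetric positive semidefinite of rank $d$, with eigenvalues $\lambda_1\ge\dots\ge\lambda_d>0$ and eigenvalue $0$ of multiplicity $M-d$; from its spectral decomposition $\Vb\Vb^T=\Db\operatorname{diag}(\lambda_1,\dots,\lambda_d,0,\dots,0)\Db^T$ one gets $\Sigmab=\Db\Ab\Db^T$ with $\Ab=\operatorname{diag}(a_1,\dots,a_d,b,\dots,b)$ and $a_i=b+\lambda_i>b$, i.e.\ exactly the reparameterization~\eqref{eq:repScale}--\eqref{eq:decompA}, the first $d$ columns of $\Db$ spanning the column space of $\Vb$. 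Combined with the first step, this establishes $\Yv\sim\mathcal{HE}_{Md}(\mub,\widetilde{\Db}^*,\ab,b,\thetab)$.

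For~\eqref{eq:condX} I would again fix $w>0$: the pair $(\Xv,\Yv)$ is jointly Gaussian given $W=w$ with cross-covariance $\operatorname{Cov}(\Xv,\Yv\mid W=w)=w^{-1}\Vb^T$, so the standard Gaussian conditioning formulas give conditional mean $\Vb^T\Sigmab^{-1}(\yv-\mub)$ and conditional covariance $w^{-1}\big(\Ib_d-\Vb^T\Sigmab^{-1}\Vb\big)$. The crucial, and I expect the only genuinely non-mechanical, step is the push-through (Woodbury) identity $\Vb^T\Sigmab^{-1}=(b\Ib_d+\Vb^T\Vb)^{-1}\Vb^T=\Ub^{-1}\Vb^T$, which holds because $b>0$ makes both $\Sigmab$ and $\Ub$ invertible; applying it to the mean yields $\Ub^{-1}\Vb^T(\yv-\mub)$ and to the covariance yields $\Ib_d-\Ub^{-1}\Vb^T\Vb=\Ub^{-1}(\Ub-\Vb^T\Vb)=b\,\Ub^{-1}$, which is~\eqref{eq:condX}. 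Finally the two expectation identities follow from the tower rule: $\mathbb{E}[\Eb\mid\Xv,W]=\mathbb{E}[\Eb\mid W]=\mathbf{0}_M$ gives $\mathbb{E}[\Yv\mid\Xv=\xv]=\Vb\xv+\mub$, and since the conditional mean of $\Xv$ obtained above does not depend on $w$, averaging over $W$ gives $\mathbb{E}[\Xv\mid\Yv=\yv]=\Ub^{-1}\Vb^T(\yv-\mub)$. So the only obstacles beyond routine bookkeeping are the push-through identity and the rank argument for $\Vb\Vb^T$.
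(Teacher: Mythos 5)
Your proof is correct and follows essentially the same route as the paper's: condition on $W$ to reduce everything to Gaussian algebra, identify the scale matrix $b\Ib_M+\Vb\Vb^T$ via the rank-$d$ spectral decomposition, and use the tower property for the two conditional expectations. The only difference is that you spell out the steps the paper leaves as ``standard Gaussian vectors properties'' — in particular the push-through identity $\Vb^T\Sigmab^{-1}=\Ub^{-1}\Vb^T$ and the simplification of the conditional covariance to $w^{-1}b\,\Ub^{-1}$ — which is a welcome addition rather than a deviation.
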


\begin{proof}
    It comes from the first three assumptions that
    $$(\Yv | W=w) \sim {\cal N}(\mub, w^{-1} (\Vb\Vb^T +  b\Ib_M)) $$
    and from  definition \ref{def:GSM} that
    $\Yv  \sim {\cal E}_M(\mub, \Vb\Vb^T +  b\Ib_M, f_\thetab)\; . $
    Since $\Vb\Vb^T$ is of rank $d$, $\Vb\Vb^T +  b\Ib_M$ admits an eigenvalue decomposition $\Db \Ab \Db^T$ as stated.
    Distribution (\ref{eq:condX}) follows from standard Gaussian vectors properties. 
    Using the tower property,
    $\mathbb{E}[\Yv | \Xv=\xv ] = \mathbb{E}[\mathbb{E}[\Yv | \Xv=\xv, W ] ]=  \Vb\xv + \mub$ and 
    $\mathbb{E}[\Xv | \Yv=\yv ] = \mathbb{E}[\mathbb{E}[\Xv | \Yv=\yv, W ] ]=  \Ub^{-1}\Vb^T(\yv - \mub).$
\end{proof}

The previous proposition states that $\yv \in \Rset^M$ from an HD-ED, can also be seen as  originating from a model with a lower-dimensional latent variable $\Xv \in \Rset^d$.  Hence, a natural alternative to the standard PCA projection, is the conditional mean $\widehat{\yv}= \mathbb{E}[\Xv | \Yv=\yv]$, that is 
\begin{equation}
    \label{eq:post_mean}
    \widehat{\yv} = Q(\yv) =  \Ub^{-1} \Vb^T(\yv - \mub),
\end{equation}
and as a reconstruction or an approximation of the original information, the conditional mean  $\check{\yv}= \mathbb{E}[\Yv | \Xv=\widehat{\yv}]$, that is  $\check{\yv} = \Vb \widehat{\yv} + \mub$. Using the previous formulas, lower dimensional representations can  thus be obtained using $\Ub$ and $\Vb$ but when estimating the parameters of the HD-ED, we get estimates for $(a_1, \dots, a_d)$, $b$ and $\widetilde{\Db}^*$. However, using that $\Vb$ is of rank $d$ and $\Db \Ab \Db^T = b \Ib_M + \Vb \Vb^T$, we can set 
\begin{equation}
   \Vb = \widetilde{\Db}^* \sqrt{\operatorname{diag}(a_1, \dots, a_d) - b\mathbf{I}_d}, 
   \label{def:V}
\end{equation}
and deduce $\Ub$ straightforwardly. 
Interestingly, the projection and reconstruction formulas do not directly depend on $f_\thetab$. The dependence is only implicit via the impact of $f_\thetab$ on the estimation of the other parameters.

\subsection{Cluster globally, reduce locally}
The proposed dimension reduction and reconstruction, using the HD-ED model, generalizes to GSM, PPCA
\citep{tipping1999} and robust PPCA \citep{archambeau2006},  the former using Gaussian and the latter Student distributions. Both have been extended to account for potential heterogeneity in data, considering mixtures, with the MPPCA model \citep{tipping1999mix} and its Student-based robust version \citep{archambeau2008}. 
In these models, all clusters are assumed to live in subspaces of the same dimension $d$. In the Gaussian case, an extension, allowing varying reduced dimensions $d_k$ across clusters, have been proposed by \cite{bouveyron2007} with their high-dimensional Gaussian mixture model (HD-GMM). 
The proposed HD-GMM parameterization allows  to handle high-dimensional data in a computationally efficient way. However, it  does not provide an actual lower dimensional representation of the data. While such a reduced-dimensional representation  may often not be  needed, it may be crucial to deal with hardware or software limitations. Originally, HD-GMM have not been designed for dimension reduction or compression but rather for clustering and density estimation in high-dimensional heterogeneous settings.
Our HD-MED model generalizes the covariance matrix decomposition to the scale matrix of a GSM distribution. In addition, we 
describe how it can be further exploited  as a dimension reduction technique. 
As  mixture models, HD-MED can be used for clustering data into $K$ clusters. For any possible observation $\yv$, a  HD-MED model  provides a probability $r_k(\yv)$ that $\yv$ is assigned to cluster $k$ for each $k=1\!:\!K$. Using (\ref{eq:post_mean}), a reduced-dimension representation $\widehat{\yv}_k$ of $\yv$, for each of the $K$ different subspaces, is denoted by $\widehat{\yv}_k = Q_k(\yv)$ and given by,
\begin{equation}
    \label{eq:post_meank}
    \widehat{\yv}_k = Q_k(\yv) = \Ub_k^{-1} \Vb_k^T(\yv - \mub_k),
\end{equation}
while its  reconstruction $\check{\yv}_k$ in the original space is given by $\check{\yv}_k = \Vb_k \widehat{\yv}_k + \mub_k\;. $
In practice, it is reasonable to use as a reduced-dimension representation of  $\yv$ only the one corresponding to the most probable cluster $k$, {\it i.e.} with the highest $r_k(\yv)$. 
In this setting, HD-MED acts as a divide-and-conquer paradigm by initially clustering the data into $K$ clusters and then performing cluster-specific data reduction.  
The divide step allows a much more effective reduction than if a single subspace was considered, while in the conquer step, 
little information is lost, as for any new observation $\yv$, cluster assignment probabilities $r_k(\yv)$ can be straightforwardly computed to decide on the best reduced representation to be used.
However, it is important to keep track of clustering information for each observation. The reduced representations cannot be pooled back altogether, as they are likely to become impossible to distinguish across clusters. 
Also, as summarized in Algorithm~\ref{alg:DC} and illustrated in Section \ref{sec:MRF}, each reduced cluster may have to be  processed  separately but this additional cost is negligible compared to the hardware and software gain of a more efficient representation. 

The Expectation-Maximization (EM) algorithm (\cite{dempster1977}) that iteratively  maximizes the conditional expectation of the complete-data log-likelihood,  is commonly used to estimate finite mixture models. 
Gradient algorithms can also be used but have not been yet very popular, see a more detailed discussion in Appendix Section 2.
The number of clusters $K$, and their respective inner dimension $d_k$ are hyperparameters that need to be  tuned prior to the EM steps. As a simple solution, we use the Bayesian Information Criterion (BIC) to tune  $K$ and the $d_k$'s at the same time. 
In contrast to MPPCA solutions which assume the same subspace dimension $d$ for all clusters, the possibility to handle different $d_k$'s and to allow non Gaussian cluster shapes is important for the target applications involving datasets that are very large.  Using different dimensions across clusters is likely to yield a more efficient reduced representation of the data as illustrated in Section \ref{sec:MRF}. 

The standard, or batch EM algorithm needs all the dataset to be loaded in a fast access memory (\textit{e.g} the RAM) which is often limited as this kind of memory is expensive. In the case of large-scale dataset, the RAM is often overloaded and supported by a slow-memory which makes EM iterations  very slow. 
Batch sizes are then limited by resource constraints, so that very large data sets need either to be downsampled or to be handled in an incremental manner. Incremental versions of EM 
exist and can be adapted to our setting. 
In section~\ref{sec:oem}, we  provide a way to deal with this large-scale case by using an online version of  EM.

\section{Online Learning of High Dimensional Mixtures of Elliptical Distributions}
\label{sec:oem}
\subsection{Online EM, main assumptions}
When the data volume is too large, EM becomes slow because of multiple data transfer between the RAM and the  computer storage. A way to handle large volumes is to  use  online learning. Online learning refers to procedures dealing with data acquired sequentially. Online  variants of EM, among others, are described in \cite{Cappe2009,Maire:2017aa,Karimi:2019ab,Karimi:2019aa,Fort:2020aa,Kuhn:2020aa,Nguyen:2020ac,Dieuleveut2023}. As an archetype of such algorithms, we  consider the online EM of \cite{Cappe2009} which belongs to the family of stochastic approximation  algorithms (\cite{borkar2009,Dieuleveut2023}). This algorithm has been theoretically studied and extended. However, it is designed only for distributions that admit a data augmentation scheme yielding a  complete likelihood of the exponential family form, see \eqref{eq: A1} below.  This case is already very broad, including  {\it e.g.} Gaussian, gamma, Student distributions and mixtures of those.  We recall  the main assumptions required and the online EM iteration, based on a latent variable formulation. 

Assume $\left(\Yv_{i}\right)_{i=1}^{N}$ is a sequence of $N$ {\it i.i.d.} replicates of a random variable $\Yv \in  {\cal Y} \subset \Rset^{M}$, observed  one at a time. Extension to successive mini-batches  is  straightforward (\cite{Nguyen:2020ac}). In addition, $\Yv$ is assumed to be the visible part of $\left(\Yv,\Zv\right)$, where $\Zv\in\Rset^{l}$ is a latent variable, {\it e.g.} the unknown component label in a mixture or a mixing weight in a GSM formulation, and $l\in\Nset$. For $i \in [1:N]$ then, each $\Yv_{i}$ is the visible part of $\left(\Yv_{i}, \Zv_{i}\right)$. Suppose $\Yv$ arises from some data generating process (DGP) characterised by a probability density function  $f\left(\yv;\sbmm{\Theta}_0\right)$, with unknown parameters $\boldsymbol{\Theta}_0\in\mathbb{T} \subseteq \Rset^{p}$, for $p\in\Nset$. 

Using the sequence $\left(\Yv_{i}\right)_{i=1}^{N}$, the  method of \cite{Cappe2009} allows to sequentially estimate $\boldsymbol{\Theta}_{0}$ provided the  following assumptions are met:

\begin{itemize}
\item [{(A1)}] The complete-data likelihood for $(\Yv, \Zv)$
is of the exponential family form: 
\begin{equation}
f_{c}\left(\yv, \zv;\boldsymbol{\Theta}\right) = h\left(\yv, \zv\right)\exp\left\{ \left[\mathbf{s}\left(\yv, \zv\right)\right]^{\top}\boldsymbol{\phi}\left(\boldsymbol{\Theta}\right)-\psi\left(\boldsymbol{\Theta}\right)\right\} \text{,}\label{eq: A1}
\end{equation}
with $h : \Rset^{M+l} \rightarrow \left[0,\infty\right)$, $\psi : \Rset^{p} \rightarrow \Rset$, $\mathbf{s} : \Rset^{M+l} \rightarrow \Rset^{q}$, $\boldsymbol{\phi} : \Rset^{p}\rightarrow\Rset^{q}$, for $q \in \Nset$.
\item [{(A2)}] The function
\begin{align}
\bar{\mathbf{s}}\left(\yv;\boldsymbol{\Theta}\right) & =\mathbb{E}\left[\mathbf{s}\left(\Yv,\Zv\right)|\Yv = \yv ; \boldsymbol{\Theta}\right]\label{eq: A2}
\end{align}
is well-defined for all $\yv$ and $\boldsymbol{\Theta}\in\mathbb{T}$,
where $\mathbb{E}\left[\cdot|\Yv=\yv;\boldsymbol{\Theta}\right]$
is the conditional expectation when $\Xv$ arises from the DGP characterised by $\boldsymbol{\Theta}$.
\item [{(A3)}] There is a convex $\mathbb{S}\subseteq\Rset^{q}$,
 satisfying:
(i) for all $\gamma \in \left(0,1\right)$, $\mathbf{s} \in \mathbb{S}$, $\yv\in {\cal Y}$, and $\boldsymbol{\Theta}\in\mathbb{T}$,
$\left(1-\gamma\right)\mathbf{s}+\gamma\bar{\mathbf{s}}\left(\yv;\boldsymbol{\Theta}\right)\in\mathbb{S}\text{;}$ and 
(ii) for any $\mathbf{s}\in\mathbb{S}$, the function 
$Q\left(\mathbf{s};\boldsymbol{\Theta}\right)=\mathbf{s}^{\top}\boldsymbol{\phi}\left(\boldsymbol{\Theta}\right)-\psi\left(\boldsymbol{\Theta}\right)$
has a unique global maximizer on $\mathbb{T}$ denoted by
\begin{equation}
\bar{\boldsymbol{\Theta}}\left(\mathbf{s}\right)=\underset{\boldsymbol{\Theta}\in\mathbb{T}}{\arg\max}\;Q\left(\mathbf{s};\boldsymbol{\Theta}\right)\text{.}\label{eq: def theta}
\end{equation}
\end{itemize}

Let $\left(\gamma_{i}\right)_{i=1}^{N}$ be a sequence of learning
rates in $\left(0,1\right)$ and $\boldsymbol{\Theta}^{\left(0\right)}\in\mathbb{T}$
 an initial estimate of $\boldsymbol{\Theta}_{0}$. For each $i\in [1:N]$,
the online EM of \cite{Cappe2009} proceeds by computing 
\begin{align}
\mathbf{s}^{\left(i\right)}&=\gamma_{i}\bar{\mathbf{s}}(\yv_{i};\boldsymbol{\Theta}^{(i-1)})+\left(1-\gamma_{i}\right)\mathbf{s}^{\left(i-1\right)}\text{,}\label{eq: S up}\\
 \mbox{and} \qquad
\boldsymbol{\Theta}^{\left(i\right)}&=\bar{\boldsymbol{\Theta}}(\mathbf{s}^{\left(i\right)})\text{,}\label{eq: Theta up}
\end{align}
where $\mathbf{s}^{\left(0\right)}=\bar{\mathbf{s}}(\yv_{1};\boldsymbol{\Theta}^{\left(0\right)})$.
Theorem 1 of \cite{Cappe2009} shows that  the sequence  $(\boldsymbol{\Theta}^{\left(i\right)})_{i=1:N}$ of estimators of $\boldsymbol{\Theta}_{0}$  satisfies a convergence result to stationary points of the likelihood (\textit{cf.} \cite{Cappe2009} for a more precise statement).

\subsection{Online EM for HD-ED}
We now derive the online EM for HD-ED. The extension to HD-MED mixtures is straightforward and is detailed in \cite{NguyenForbes2021} or in the supplementary Section 3.
The weight distribution $f$ in the GSM formulation \eqref{eq:GSM} is assumed to belong to the exponential family. This case may seem restrictive but it  encompasses a number of ED as the Gaussian, Student, Normal Inverse Gamma with no skewness \textit{etc.} distributions.

\begin{proposition}[HD-ED exponential form]

Let $\Yv$ be a HD-ED distributed variable,  $\Yv \sim \mathcal{HE}_{Md} (\mub, \widetilde{\Db}^*, \ab, b, \thetab)$, and  $W$ a weight variable with pdf $f_\thetab$. The set of parameters is denoted by $\Thetab = (\mub, \widetilde{\Db}^*, \ab, b, \thetab)$, with $\widetilde{\Db}^*$ defined by its column vectors $\widetilde{\Db}^* = \left[\db_1, \dots, \db_d\right]$.
If $W$ belongs to the exponential family, {\it i.e.}
$f_{\thetab}(w) = h_w(w) \exp \left[\sb_w(w)^T  \phib_w(\thetab) - \psi_w(\thetab) \right],$
the complete data likelihood
\begin{equation}
    \label{eq:comp_like}
    f_c(\yv, w; \Thetab) = f_{\thetab}(w)\;  \mathcal{N}_M(\yv; \mub, w^{-1}\Db \Ab \Db^T),
\end{equation}
can be expressed in an exponential family form  (\ref{eq: A1}) with
\begin{equation}
    \label{eq:exp_form}
    \begin{aligned}
    \sb(\yv, w) &= \begin{bmatrix}
           w \yv \\
           w \operatorname{vec}(\yv\yv^T) \\
           w \yv^T\yv \\
           w  \\
           \sb_w(w) \\
         \end{bmatrix}, \hspace{.5cm}
    \phib(\Thetab) = \begin{bmatrix}
          \sum \limits_{m=1}^d \left( \frac{1}{a_m} - \frac{1}{b}\right) \db_m \db_m^T \mub + \frac{1}{b}\mub \\
          \frac{1}{2} \sum \limits_{m=1}^{d} \left( \frac{1}{b} - \frac{1}{a_m}\right) \operatorname{vec}(\db_m \db_m^T) \\
          -\frac{1}{2b} \\
          \frac{1}{2} \sum \limits_{m=1}^d \left( \frac{1}{b} - \frac{1}{a_m}\right) \mub^T\db_{m}\db_{m}^T\mub - \frac{1}{2b}\mub^T \mub \\
          \phi_w(\thetab) \\
         \end{bmatrix}, \\
         \psi(\Thetab) &= \frac{1}{2} \sum \limits_{m=1}^{d} \log a_m + \frac{M-d}{2} \log b + \psi_w(\thetab),
    \end{aligned}
\end{equation}
where $\operatorname{vec}$ denotes the vectorization operator (\cite{Schott2016}).
\end{proposition}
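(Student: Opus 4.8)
The plan is to take the logarithm of the complete-data likelihood \eqref{eq:comp_like}, isolate the factor $f_{\thetab}(w)$ which is already of exponential-family form by hypothesis, and then rewrite the remaining Gaussian factor $\mathcal{N}_M(\yv;\mub,w^{-1}\Db\Ab\Db^T)$ so that every $\Thetab$-dependent quantity is collected either into a single inner product $\sb(\yv,w)^\top\phib(\Thetab)$ or into the scalar $-\psi(\Thetab)$, with all $\Thetab$-free factors absorbed into $h(\yv,w)$. Matching the resulting expression term by term with \eqref{eq: A1} then produces the claimed $\sb$, $\phib$ and $\psi$.

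First I would expand the Gaussian log-density as
\begin{equation*}
\log\mathcal{N}_M\bigl(\yv;\mub,w^{-1}\Db\Ab\Db^T\bigr) = -\tfrac{M}{2}\log(2\pi) + \tfrac{M}{2}\log w - \tfrac12\log\bigl|\Db\Ab\Db^T\bigr| - \tfrac{w}{2}(\yv-\mub)^\top\Db\Ab^{-1}\Db^\top(\yv-\mub).
\end{equation*}
The pieces $-\tfrac{M}{2}\log(2\pi)+\tfrac{M}{2}\log w$ do not involve $\Thetab$, so together with $h_w(w)$ from $f_{\thetab}$ they make up $h(\yv,w)=(2\pi)^{-M/2}\,w^{M/2}\,h_w(w)$. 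Since $\Db$ is orthogonal, $|\Db\Ab\Db^T|=|\Ab|=\bigl(\prod_{m=1}^d a_m\bigr)b^{M-d}$, so that $-\tfrac12\log|\Db\Ab\Db^T| = -\bigl(\tfrac12\sum_{m=1}^d\log a_m+\tfrac{M-d}{2}\log b\bigr)$; adding $-\psi_w(\thetab)$ from $f_{\thetab}$ gives exactly the stated $-\psi(\Thetab)$.

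Next I would handle the Mahalanobis term using the spectral identity
\begin{equation*}
\Db\Ab^{-1}\Db^\top = \sum_{m=1}^d\tfrac{1}{a_m}\db_m\db_m^\top + \tfrac1b\sum_{m=d+1}^{M}\db_m\db_m^\top = \sum_{m=1}^d\Bigl(\tfrac1{a_m}-\tfrac1b\Bigr)\db_m\db_m^\top + \tfrac1b\Ib_M,
\end{equation*}
the last step using $\sum_{m=1}^M\db_m\db_m^\top=\Db\Db^\top=\Ib_M$; this is the same reduction already exploited in \eqref{eqn:proj} and it is what allows the natural parameter to depend only on the first $d$ columns of $\Db$. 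Substituting this into $-\tfrac{w}{2}(\yv-\mub)^\top\Db\Ab^{-1}\Db^\top(\yv-\mub)$ and expanding $(\yv-\mub)^\top A(\yv-\mub)=\yv^\top A\yv-2\yv^\top A\mub+\mub^\top A\mub$ for each rank-one piece $A=\db_m\db_m^\top$ and for $A=\Ib_M$, I would regroup the terms according to the monomials $w\yv$, $w\operatorname{vec}(\yv\yv^\top)$, $w\,\yv^\top\yv$ and $w$. For the quadratic-in-$\yv$ contributions I would use $\yv^\top\db_m\db_m^\top\yv=\operatorname{tr}(\db_m\db_m^\top\yv\yv^\top)=\operatorname{vec}(\db_m\db_m^\top)^\top\operatorname{vec}(\yv\yv^\top)$, valid since $\db_m\db_m^\top$ is symmetric. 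Reading off the coefficients yields precisely the first four coordinate blocks of $\sb(\yv,w)$ and $\phib(\Thetab)$ in \eqref{eq:exp_form}, while the fifth blocks $\sb_w(w)$ and $\phib_w(\thetab)$ come straight from $\log f_{\thetab}(w)=\log h_w(w)+\sb_w(w)^\top\phib_w(\thetab)-\psi_w(\thetab)$.

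The only genuine obstacle is the bookkeeping in this last expansion: separating the linear, quadratic and $\yv$-free contributions of the quadratic form and correctly tracking the factors of $\tfrac12$ and the sign conventions $\tfrac1b-\tfrac1{a_m}$ versus $\tfrac1{a_m}-\tfrac1b$ across all the blocks. The vectorization identity, the orthogonality of $\Db$ and the normalizing-constant/determinant accounting are all routine once the decomposition above is in place, so no further difficulty is expected.
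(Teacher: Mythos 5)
Your derivation is correct: the log-expansion of the Gaussian factor, the spectral identity $\Db\Ab^{-1}\Db^\top=\sum_{m=1}^d(\tfrac{1}{a_m}-\tfrac{1}{b})\db_m\db_m^\top+\tfrac{1}{b}\Ib_M$, the determinant computation $|\Db\Ab\Db^\top|=(\prod_m a_m)\,b^{M-d}$, and the trace/vectorization identity all check out, and the signs and factors of $\tfrac12$ you read off match \eqref{eq:exp_form} exactly. This is essentially the same term-by-term matching argument the paper relegates to its supplementary material, so nothing further is needed.
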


\begin{proof}
    The proof is detailed in supplementary Section 4.
\end{proof}

The online EM algorithm (OEM) consists, as the batch EM, of two steps, the computation of the sufficient statistics~\eqref{eq: A2} and the maximization of the likelihood~\eqref{eq: def theta}. For the first step, we need to compute $\overline{\sb} \left(\yv; \Thetab\right) = \mathbb{E}[\sb(\Yv, \Zv) | \Yv = \yv; \Thetab]$. This quantity requires to compute the following expectations $\mathbb{E}[W | \Yv = \yv; \Thetab]$, and $\mathbb{E}[s_w(W)| \Yv = \yv; \Thetab]$.

\begin{proposition}[Expectations of sufficient statistics]
Let  $\Yv \sim \mathcal{HE}_{Md}(\mub, \widetilde{\Db}^*, \ab, b, \thetab)$, $\Thetab = \left(\mub, \widetilde{\Db}^*, \ab, b, \thetab\right)$, and  $W$ the mixing variable $W \sim f_{\thetab}$. Then $\Yv$ has density~\eqref{eq:ed} with generator $g$ and Mahalanobis distance $u = (\yv-\mub)^T \Db\Ab^{-1}\Db^T(\yv-\mub)$ defined as in \eqref{eqn:proj}. With  $g'$ denoting the derivative of $g$, it follows that
\begin{align}
    \mathbb{E}\left[W | \Yv=\yv; \Thetab\right] &= -\frac{2}{(2 \pi)^{M/2}} \frac{g'(u)}{g(u)}. \label{eq:expectation_T} 
\end{align} 
\end{proposition}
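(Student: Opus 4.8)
The plan is to exploit the Gaussian scale mixture structure together with the fact that the marginal density of $\Yv$ is elliptical, and to relate the posterior mean of $W$ to the logarithmic derivative of the density generator $g$. First, I would write out the two densities that matter: by Definition~\ref{def:GSM}, the marginal is $p(\yv) = \int_{\Rset^+} {\cal N}_M(\yv;\mub,\Sigmab/w)\, f_\thetab(w)\,\diff w$ with $\Sigmab = \Db\Ab\Db^T$, and the joint density of $(\Yv,W)$ is the integrand ${\cal N}_M(\yv;\mub,\Sigmab/w)\,f_\thetab(w)$. Then the posterior density of $W$ given $\Yv=\yv$ is this integrand divided by $p(\yv)$, so that
\begin{equation*}
\mathbb{E}[W\mid \Yv=\yv;\Thetab] = \frac{1}{p(\yv)}\int_{\Rset^+} w\,{\cal N}_M\!\left(\yv;\mub,\tfrac{\Sigmab}{w}\right) f_\thetab(w)\,\diff w .
\end{equation*}

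The second step is the key computational trick: I would expand ${\cal N}_M(\yv;\mub,\Sigmab/w) = (2\pi)^{-M/2} w^{M/2} |\Sigmab|^{-1/2} \exp(-\tfrac{w}{2} u)$, where $u = (\yv-\mub)^T\Sigmab^{-1}(\yv-\mub)$ is exactly the Mahalanobis distance from \eqref{eqn:proj}. Viewing $p(\yv)$ as a function of $u$, we have $p(\yv) = C_{p,g}|\Sigmab|^{-1/2} g(u)$ by \eqref{eq:ed}, and hence $C_{p,g}\, g(u) = (2\pi)^{-M/2}\int_{\Rset^+} w^{M/2} e^{-wu/2} f_\thetab(w)\,\diff w$. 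Differentiating both sides with respect to $u$ under the integral sign (justified by dominated convergence, since $g$ is smooth wherever the integral converges), each differentiation brings down a factor $-w/2$:
\begin{equation*}
C_{p,g}\, g'(u) = -\frac{1}{2}\,(2\pi)^{-M/2}\int_{\Rset^+} w^{M/2+1} e^{-wu/2} f_\thetab(w)\,\diff w .
\end{equation*}
The integral on the right is, up to the prefactor, exactly $\int w\,{\cal N}_M(\yv;\mub,\Sigmab/w) f_\thetab(w)\,\diff w \cdot |\Sigmab|^{1/2}$. Taking the ratio of the displayed identity for $g'(u)$ to the one for $g(u)$ cancels $C_{p,g}$, $|\Sigmab|^{1/2}$, and reconstitutes $p(\yv)$ in the denominator, yielding $\mathbb{E}[W\mid\Yv=\yv;\Thetab] = -2\, g'(u)/g(u)$ — though I note this gives a clean factor of $2$ and I would need to recheck whether the stated extra $(2\pi)^{-M/2}$ is an artifact of a particular normalization convention for $g$ versus $C_{p,g}$; the cleanest route is to absorb constants consistently so that the ratio is purely $-2g'(u)/g(u)$, or to carry the convention that makes the $(2\pi)^{M/2}$ appear, and verify it against the Gaussian case $g(u)=e^{-u/2}$ (where $W\equiv 1$) as a sanity check.

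An alternative, essentially equivalent, derivation I would mention is via Tweedie-type / Stein-type identities: since $\nabla_{\yv}\log p(\yv) = (g'(u)/g(u))\cdot 2\Sigmab^{-1}(\yv-\mub)$ on one hand, and on the other $\nabla_\yv \log p(\yv) = -\mathbb{E}[W\mid\yv]\,\Sigmab^{-1}(\yv-\mub)$ by differentiating the Gaussian integrand, one reads off the identity directly. The main obstacle is not conceptual but bookkeeping: getting every normalization constant ($(2\pi)^{M/2}$, $|\Sigmab|^{1/2}$, $C_{p,g}$) to cancel correctly and in particular pinning down the precise form of the prefactor in the claimed formula, which depends on how $g$ is normalized relative to $C_{p,g}$. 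I would resolve this by fixing one convention at the outset (say, the one implicit in the GSM representation, $C_{p,g}|\Sigmab|^{-1/2}g(u) = \int (2\pi w^{-1})^{-M/2}|\Sigmab|^{-1/2} e^{-wu/2} f_\thetab(w)\diff w$) and checking consistency on the Gaussian and Student special cases before stating the general result. The expectation $\mathbb{E}[s_w(W)\mid\Yv=\yv;\Thetab]$ (needed for the full algorithm but only the $W$-expectation is claimed in this proposition) would be handled analogously, noting that $s_w(W)$ is whatever sufficient statistic appears in the exponential-family form of $f_\thetab$, so its posterior expectation is again a ratio of integrals that can, for the standard cases (Student, etc.), be written in closed form using the same differentiation-under-the-integral idea applied to $\partial/\partial\thetab$ rather than $\partial/\partial u$.
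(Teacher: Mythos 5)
Your proposal follows essentially the same route as the paper's proof: express $\mathbb{E}[W\mid\Yv=\yv]$ as a ratio of integrals via the GSM representation and identify the numerator with $g'(u)$ by differentiating under the integral sign. The concern you flag about the $(2\pi)^{-M/2}$ prefactor is well-founded: with the consistent convention $C_{p,g}\,g(u)=(2\pi)^{-M/2}\int_{\Rset^+} w^{M/2}e^{-wu/2}f_\thetab(w)\,\diff w$, the posterior density of $W$ is proportional to $w^{M/2}f_\thetab(w)e^{-wu/2}$ (the paper's displayed computation omits the $w^{M/2}$ Jacobian factor from the conditional Gaussian), and the ratio comes out as $-2g'(u)/g(u)$ with no $(2\pi)^{-M/2}$; since $g'/g$ is invariant under constant rescaling of $g$, no normalization convention can reinstate that prefactor, and both your Gaussian sanity check ($g(u)=e^{-u/2}$, $W\equiv 1$) and the Student case ($\mathbb{E}[W\mid\yv]=(M+\nu)/(\nu+u)$) confirm the clean form. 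So your derivation is the more careful one; the extra constant in the stated formula should be regarded as a typo rather than something your proof needs to reproduce.
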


\begin{proof}
Equation~\eqref{eq:expectation_T} results from
 equations \eqref{eq:ed} and \eqref{eq:GSM}, which lead to
 
  $  \mathbb{E}\left[W | \Yv=\yv; \Thetab\right] = \frac{1}{(2\pi)^{M/2} g(u)}\int \limits_{\Rset^+} w \; f_{\thetab}(w) \exp (- \frac{w}{2}u) \diff w = -\frac{2}{(2 \pi)^{M/2}} \frac{g'(u)}{g(u)}.
  $
\end{proof}

 In contrast, there is no general formula for the expectation of $s_w(W)$, which depends on the mixing distribution $f_{\thetab}$. Once the expectation of the sufficient statistics in~\eqref{eq: A2}  is computed, we can update it following~\eqref{eq: S up}.

The next OEM step is the maximization step described in~\eqref{eq: def theta}, which gives an estimation of the parameters at each iteration. The solution for $\thetab$ varies with $f_\theta$, but solutions for $\mub$, $\Ab$, and $\widetilde{\Db}^*$ can be derived as follows. Let  $\bar{\Thetab}(\sb)$ be defined as the unique maximizer of function $Q(\sb, \Thetab) = \sb^T \phib(\Thetab) - \psi(\Thetab)$ with $\sb$ a vector that matches the definition and dimension of $\phib(\Thetab)$ in~\eqref{eq:exp_form}, and can be conveniently written as
\begin{equation}
    \sb = \begin{bmatrix}
          \sb_{1} \\
          \operatorname{vec}(\Sb_{2}) \\
           s_{3} \\
           s_{4}  \\
           s_{5} \\
         \end{bmatrix},
\end{equation}
with $\sb_{1}$ an $M$-dimensional vector, $\Sb_{2}$ an $M \times M$ matrix, and  $s_{3}$, $s_{4}$, $s_{5}$ three scalar values. Parameters are  updated by maximizing $Q$ with respect to $\Thetab$. $\bar{\Thetab}(\sb)$ is  defined as the root of the first-order condition involving $\Jb_{\phib}(\Thetab) = \frac{\partial \phib}{\partial \Thetab}$  the Jacobian of $\phi$,
\begin{equation}
    \label{eq:root}
    \Jb_{\phib}(\Thetab)\sb - \frac{\partial \psi}{\partial \Thetab}(\Thetab) = \mathbf{0}.
\end{equation}
 Computing gradients leads to $\bar{\Thetab}(\sb) = (\overline{\mub}(\sb), \overline{\widetilde{\Db}^*}(\sb), \overline{\Ab}(\sb), \overline{\thetab}(\sb)$, where  $\overline{\mub}(\sb)$, and $\overline{\Ab}(\sb)$ are closed form, and $\overline{\widetilde{\Db}^*}(\sb)$ can be found using Riemannian optimization.
\begin{proposition}[Maxima]
We use an ECM-like procedure (\cite{meng1993}) by optimizing  parameters separately and incorporating them during the optimization of each other parameter. $(\mub, \Ab)$ are easily optimized using~\eqref{eq:root} and computing the gradients gives
\begin{align}
    \overline{\mub} &= \frac{\sb_{1}}{s_{4}}, \\
    \overline{a}_m &= \db_m^T \left(\Sb_{2} + s_{4}  \overline{\mub}  \overline{\mub}^T - 2  \overline{\mub} \sb_{1}^T \right) \db_m \quad \text{for } m \in [1:d], \\
    \overline{b} &= \frac{1}{M - d} \left(s_{4}  \overline{\mub}^T  \overline{\mub} + s_{3} - 2  \overline{\mub}^T \sb_{1} - \sum \limits_{m=1}^d \overline{a}_m \right).
\end{align}
Maximizing over $\widetilde{\Db}^*$ has to take into account that $\widetilde{\Db}^* \in St(M, d)$, the Stiefel manifold of the $M \times d$ matrices. Using the expressions of $\mub$ and $\Ab$, omitting parts  on $\thetab$ we have, 
\begin{equation}
    \label{eq:max_d}
    \overline{\widetilde{\Db}^*} = \argmax \limits_{\widetilde{\Db}^*\in St(M, d)} \sum \limits_{m=1}^d \left( \frac{1}{\overline{a_{m}}} -\frac{1}{\overline{b}}\right)\db_{m}^T \left( 2 \overline{\mub} \sb_{1}^T - \Sb_{2} - s_{4} \overline{\mub}  \overline{\mub}^T \right)\db_{m}.
\end{equation}
For $\bar{\thetab}$, a general closed-form expression is not available, but if there are no particular constraints it results from solving the following equation
\begin{equation}
    \label{eq:root_eq}
    s_{5}\frac{\partial \phib_w}{\partial \thetab}(\bar{\thetab}) - \frac{\partial \psi_w}{\partial \thetab}(\bar{\thetab}) = \mathbf{0}.
\end{equation}
\end{proposition}

\begin{proof}
    We compute the gradients and use \eqref{eq:root} for $\mub$ and $\Ab$, however for the   $\widetilde{\Db}^*$ we only plug-in the optimized values $\overline{\mub}$, and $\overline{\Ab}$  in \eqref{eq:comp_like} and solve
    \begin{equation}
        \argmax \limits_{\widetilde{\Db}^*\in St(M, d)} \sb^T \phib \left( \overline{\mub}, \widetilde{\Db}^*,\overline{\Ab}, \bar{\thetab} \right) - \psi \left( \overline{\mub}, \widetilde{\Db}^*, \overline{\Ab}, \bar{\thetab} \right),
    \end{equation}
    which does not depend on the value of $\bar{\thetab}$.
    \end{proof}
    In practice \eqref{eq:max_d} is solved using a Riemannian optimization framework in the setting where $M >> d$ \citep{wen2013}.

\begin{algorithm}[h]\scriptsize
    \caption{\small Divide \& Conquer high-dimensional matching for MRF reconstruction}
    \label{alg:DC}
    \begin{algorithmic}[1]
     \item[ ] {\bf Input} Dictionary of (signal, parameters) pairs ${\cal D}_f = \{\yv_i, \tv_i\}_{i=1:N}$, $N>>1$, $\tv_i \in \Rset^L$, $\yv_i \in \Rset^M$, $M >> 1$.\\
    $\qquad$ In vivo acquired signals $\{\widetilde{\yv}_j\}_{j=1:\tilde{N}}$, $\widetilde{\yv}_j \in \Rset^M$. 
     \begin{tcolorbox}[colback=orange!10,colframe=orange!50!black, boxrule=0.01pt,boxsep=2pt, top=1pt, bottom=1pt, title style={halign=flush right,before upper=\strut}]
        \vspace{1pt}
        \STATE {\bf Reduced dimension representation of the dictionary:} $\{\hat{\yv}_i, \tv_i, r.(\yv_i)\}_{i=1:N} $
        \begin{tcolorbox}[colback=cyan!10,colframe=cyan!50!black, boxrule=0.01pt,boxsep=2pt, top=1pt, bottom=1pt, title style={halign=flush right,before upper=\strut}]
            \item[1.1] {\bf Online HD-MED inference from $\{\yv_i\}_{i=1:N}$:} $K$ clusters, $d_k<M$ for $k=1:K$  $\quad \Longrightarrow$  cluster assignment probabilities and cluster-wise projections $(\rb,\Qb) =\{r_k(\cdot), Q_k(\cdot)\}_{k=1:K}$
        \end{tcolorbox}

        \begin{tcolorbox}[colback=red!10,colframe=red!50!black, boxrule=0.01pt,boxsep=2pt, top=1pt, bottom=1pt, title style={halign=flush right,before upper=\strut}]
            \item[1.2] {\bf Cluster-wise fingerprint reductions:}  $\{\yv_i\}_{i=1:N}, \rb, \Qb \quad \Longrightarrow \quad 
            \{ \widehat{\yv}_i=Q_k(\yv_i), i \in I_k\}\quad $ with
            $\quad I_k=\{i, \; s.t  \; k=\arg\max\limits_\ell r_\ell(\yv_i)\}$, for $k=1:K$
        \end{tcolorbox}
    \end{tcolorbox}

        \begin{tcolorbox}[colback=green!10,colframe=cyan!50!black, boxrule=0.01pt,boxsep=2pt, top=1pt, bottom=1pt, title style={halign=flush right,before upper=\strut}]
            \STATE {\bf Cluster-wise matching of acquired signals:} 
             \begin{tcolorbox}[colback=cyan!10,colframe=cyan!50!black, boxrule=0.01pt,boxsep=2pt, top=1pt, bottom=1pt, title style={halign=flush right,before upper=\strut}]
            \item[2.1] {\bf Cluster-wise invivo signal reductions:} Use learned $(\rb, \Qb)$ from \textit{step 1.1}  to obtain $\{Q_k(\widetilde{\yv}_j), j \in \tilde{I}_k\}\quad $ with
            $\quad \tilde{I}_k=\{j, \; s.t  \; k=\arg\max\limits_\ell r_\ell(\widetilde{\yv}_j)\}$, for $k=1:K$ 
            \end{tcolorbox}
            \begin{tcolorbox}[colback=red!10,colframe=red!50!black, boxrule=0.01pt,boxsep=2pt, top=1pt, bottom=1pt, title style={halign=flush right,before upper=\strut}]
            \item[2.2]{\bf Matching:} For $k=1:K$, for $j \in \tilde{I}_k$, determine $i(\tilde{\yv}_j) =  \arg\min\limits_{i \in I_k}d(Q_k(\widetilde{\yv}_j), \widehat{\yv}_i)$ and set $\widetilde{\tv}_j = \tv_{i(\tilde{\yv}_j)}$
            \end{tcolorbox}
            
        \end{tcolorbox}
        \item[ ] {\bf Return}  Matched tissue properties 
        $ \{\widetilde{\tv}_j\}_{j=1:\tilde{N}}$, $\widetilde{\tv}_j \in \Rset^L$
    \end{algorithmic}
\end{algorithm}

\section{Application to magnetic resonance fingerprinting (MRF) reconstruction}
\label{sec:MRF}
MRF  provides multiple tissue parameter maps from shorter acquisition times, thanks to  the simultaneous application of transient states excitation and highly undersampled $k$-space read-outs. These two aspects have a combined impact on  acquisition times and  image reconstruction accuracy.  More undersampling allows more parameter estimations  but is also  responsible for larger undersampling errors, noise and artifacts, reducing map reconstruction accuracy.

In  some earlier work \citep{Oudoumanessah2025}, the Gaussian version of our procedure, referred to as HD-GMM, was evaluated, for the reconstruction, in an ideal setting, of fully sampled acquisitions, targeting  $L=3$ parameters. In this scenario, it was reported that HD-GMM, coupled with the online EM algorithm, achieved results comparable to full dictionary matching while significantly reducing reconstruction times. However, as we report in this section, HD-GMM performance degrades when dealing with largely undersampled acquisitions and  becomes inadequate for estimating six parameters. Performance can be maintained by considering an elliptical version of our procedure (HD-MED) less sensitive to outliers.

In this section, we thus show how we can accurately reconstruct six parameter maps from \textit{in vivo} undersampled acquisitions (Section \ref{sec:acq}) by leveraging an extensive high-resolution dictionary (Section \ref{sec:dic}) and the HD-MED model. Algorithm~\ref{alg:DC} provides a schematic summary of our procedure. Figure~\ref{fig:pipeline} provides an illustration of our matching strategy or step 2 in Algorithm \ref{alg:DC}, once the HD-MED model has been estimated. Each acquired signal is first assigned to one of the learned clusters, and reduced accordingly to be then matched to the best dictionary signal in the corresponding cluster. It leads to improvements in both memory management complexity and reconstruction speed for parameter maps when compared to traditional dictionary matching, see Section \ref{sec:times}. 

We compare two instances of HD-MED, namely HD-GMM and HD-STM, where the mixture components are respectively set to Gaussian and Student distributions. The computations for the corresponding OEM are detailed in Section 5 of the Appendix.
Additional computations and results for mixtures of Laplace distributions (HD-LM) are given in Appendix Section 6.
All experiments are performed with a Python code using the JAX library \citep{jax2018github} with Nvidia H100-80gb GPU for training, and V100-32gb GPU for inference except for the dictionary generation part that is done with a mix of Matlab and Python. 

\begin{figure}
    \centering
    \includegraphics[width=0.7\linewidth]{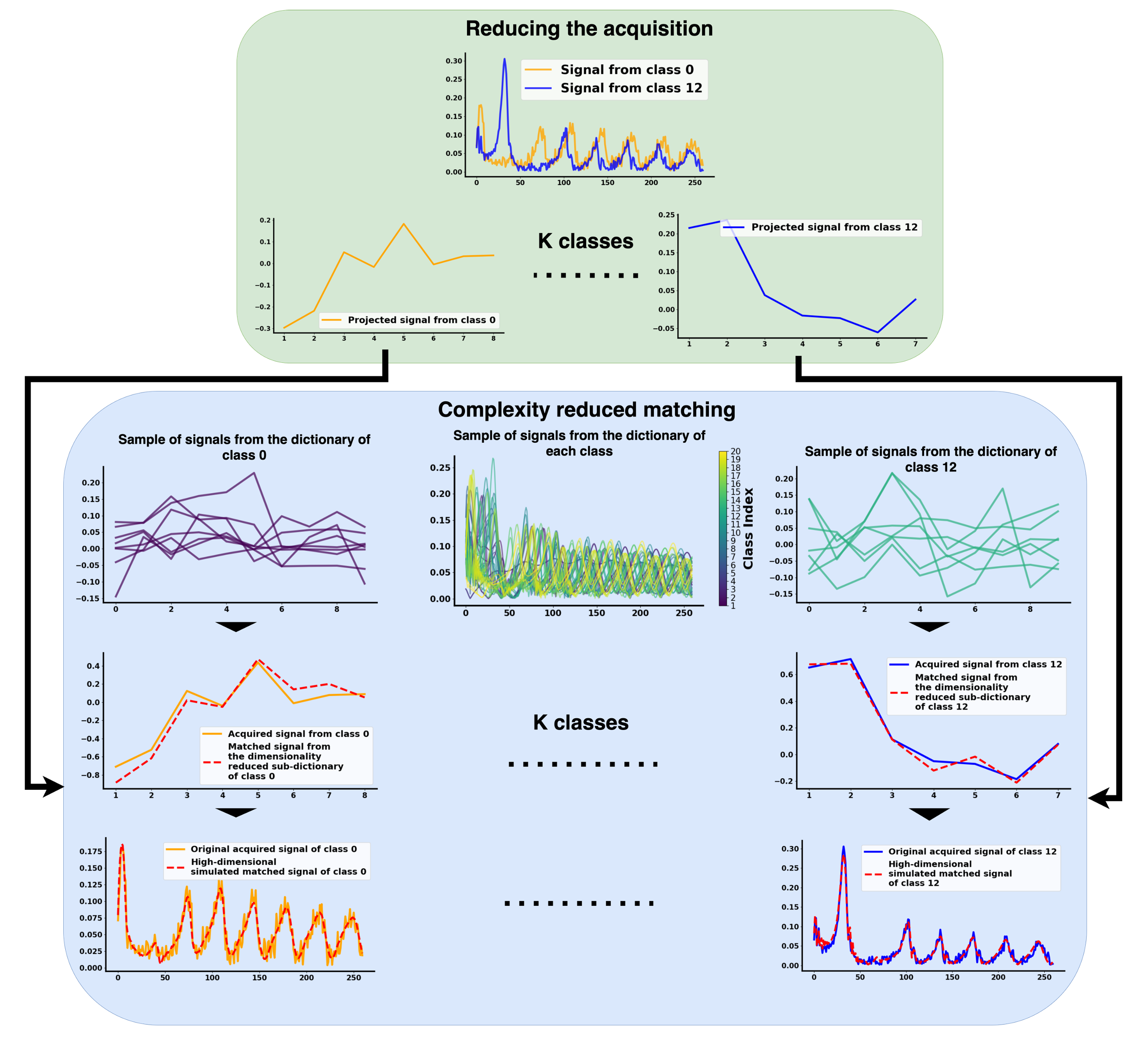}
    \caption{HD-MED-based low-dimensional signal matching step (step 2 of Algorithm \ref{alg:DC}) Green block: each acquired signal is assigned to one of the $K$  HD-MED learned clusters and reduced accordingly using the assigned cluster projection. 
    Blue block: reduced acquired signals are then matched with their closest reduced simulated counterpart in the cluster.}
    \label{fig:pipeline}
\end{figure}

\subsection{Undersampled MRF acquisitions}
\label{sec:acq}
\textit{In vivo} acquisitions were conducted on 6 healthy volunteers (28$\pm$5.5 years old, 3 males and 3 females) using a 32-channel head receiver array on a Philips 3T Achieva dStream MRI at the IRMaGe facility (MAP-IRMaGe protocol,  NCT05036629).  This study was approved by the local medical ethics committee and informed consent was obtained from each volunteer prior to image acquisition. The imaging pulse sequence was based on an IR-bSSFP acquisition. 260 repetitions were acquired following the parameters proposed in \cite{coudert2024}. The acquisitions were performed using quadratic variable density spiral sampling (12 interleaves out of 13), matrix size=192x192x(4-5), voxel size=1.04x1.04x3.00~mm$^3$ for a total scan duration of 2 minutes per slice. While the acquisition time may appear long for an MRF context compared to \cite{gu2018}, where full T1 and T2 maps are generated in 2 minutes, the longer sequences in \cite{coudert2024} account for vascular parameters without requiring contrast agents.

Spiral sampling in MRI involves acquiring data in a spiral trajectory through k-space, covering the center first and gradually moving outward, which allows for faster data acquisition and more efficient use of the scanner's gradients. However, this method leads to undersampling noise, manifesting as artifacts, because it captures less information about the image, reducing the ability to accurately reconstruct fine details \citep{korzdorfer2019}.
It results in a number $\tilde{N}_s\approx 140,000 - 180,000 $ of {\it in vivo} MR signals per subject $s$, each of dimension $M=260$, and a total number $\tilde{N}$ of approximately $\tilde{N}=960,000$ signals to be matched for the reconstruction of $L=6$ parameter maps for 6 subjects.

\subsection{MRF dictionary}
\label{sec:dic}
The MRF sequence used here is a bSSFP-derived sequence. This type of sequence has been preferred because of its sensitivity to local frequency distributions related to microvascular structures in the imaging voxel.
The MRF dictionary is generated using the approach described in \cite{coudert2024}. First, a base dictionary ${\cal D}_{f_0}$ is simulated using Bloch equations for different combinations of ($T_1$, $T_2$, $B_1$, $\delta \! f$) sampled over a 4-dimensional regular grid. Simulations were made at a magnetic field strength of $3.0$T, on a regular parameter grid made of $20$ $T_1$ values (from $200$ to $3500ms$), $20$ $T_2$ values (from $10$ to $600ms$), $10$ $B_1$ values (from $0.7$ to $1.2$) and $100$ frequency offset $\delta \! f$ values (from $-50$ to $49$ Hz with an increment of 1 Hz), keeping only signals for which \tun$>$\tdeux, resulting in an initial $390,000$ entries dictionary. These Bloch simulations are obtained using an in-house mix of Python and Matlab code based on initial implementation by B. Hargreaves \citep{Bloch_Simulator}. The relaxometric parameters vary within the dictionary to allow their estimation. $B_1$ is varied to ensure the realism of the dictionary, given the sequence's sensitivity to this parameter, which could otherwise bias the estimation of the other parameters. Finally, varying $\delta \! f$ values are necessary to compute microvascular contributions, as detailed below.

To build signals that capture additional microvascular network's blood volume (CBV) and mean vessel radius (R) information, we follow the construction of \cite{coudert2024}. Microvascular network segmentations are used to pre-compute 2500 different intra-voxel frequency distributions centred at $\delta \! f$ values. A CBV and R value characterize  each distribution. Using these pre-computed distributions, new signals are obtained by summing signals from the base dictionary weighted according to each frequency distribution. The result is an expanded 6-dimensional dictionary ${\cal D}_{f}$ of almost $N=400,000,000$ signals of dimension $M=260$, encoding for  ($T_1$, $T_2$, $B_1$, $\delta \! f$, $CBV$, $R$). Figure~\ref{fig:convol} illustrates how this operation changes a signal in the base dictionary into 2 new signals depending on $CBV$ and $R$ values.

\begin{figure}
    \centering
    \includegraphics[width=0.5\linewidth]{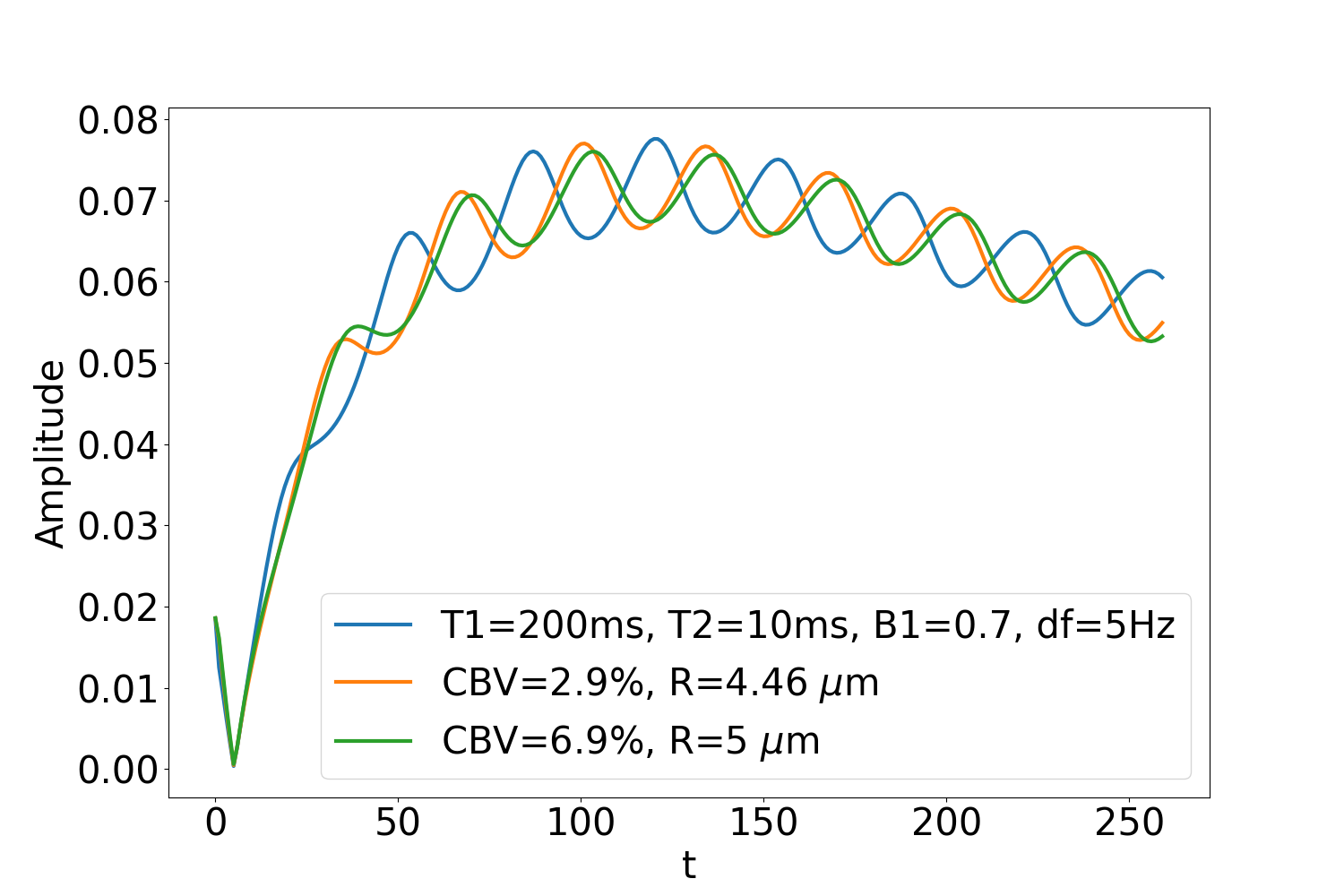}
    \caption{Illustration of the base dictionary expansion. The blue signal corresponding to a non-blood voxel with associated $T_1$, $T_2$, $B_1$ and $\delta \! f$ values. By convoluting this signal along the $\delta \! f$ dimension with two different frequency distributions corresponding to different values of $CBV$ and $R$, the resulting orange and green signals are then obtained for voxels with the same $T_1$, $T_2$, $B_1$ and $\delta \! f$ values as the blue signal but with different $CBV$ and $R$ values.}
    \label{fig:convol}
\end{figure}

\subsection{Model selection and initialization}

In practice, the online EM algorithm is run in batches \citep{Nguyen:2020ac} of size $B=2048$ signals at each iteration $r$ where $r\in [1:R]$ for typically $R = \lceil N/B\rceil$. 
When $N$ is huge, the $N$ signals are read only once in a single epoch, while for more moderate $N$, several epochs, {\it e.g.} 10, can be run with shuffling, making use of the fact that samples are available and can be revisited.
Regarding the learning rates,  $(\gamma_r)_{r=1}^R$, they are set to $\gamma_r = (1-\epsilon) r^{-0.5 -\epsilon'}$ with $\epsilon=10^{-10}, \epsilon'=10^{-1}$ and Polyak-Ruppert averaging, following theoretical recommendations. In particular, for this choice,  the MSE converges at the optimal rate and the covariance in the CLT is minimal, see \cite{Lauand2024} for additional  results and a recent literature survey.

The only two other hyperparameters that need to be set are the number $K$ of components  and the desired reduced dimension vector $\db$. 
Once initialized, these quantities are considered as fixed and not updated during the online procedure.
For mixture models, $K$ can be selected using the Bayesian Information Criterion (BIC) \citep{schwarz1978}, which requires running multiple models with varying $K$ values and finding the elbow or the minimum of the BIC curve. The desired dimension vector $\db$ is automatically set during an initialization step, which makes  $K$ the only parameter that the user needs to set prior to running the OEM algorithm.

The initialization step includes determining  the reduced dimension vector $\db$. We randomly choose a subset of the dictionary that fits into memory and run a batch full covariance EM algorithm to determine a first estimation of the $\Sigmab_k$'s. 
In practice, a subset of 10,000 signals was used.
The covariance estimations are  then decomposed into eigenvalues and eigenvectors to determine $\Db_k$'s and $\Ab_k$'s. For each $k$, $d_k$ is then determined by applying a scree plot to the eigenvalues using the kneedle algorithm \citep{satopaa2011}, $\ab_k$ is then initialized to the first $d_k$ eigenvalues, and the last eigenvalues are averaged to initialize $b_k$. This kind of initialization of $\Db_k$ and $\Ab_k$ is called spectral initialization and is similar to the one proposed by \cite{hong2021}, proving to be relatively stable compared to other types of initializations.

In Figure~\ref{fig:select} (left), we show the different BIC curves for HD-GMM and HD-STM models trained on the dictionary of signals with  $K$ varying  from $5$ to $80$. The elbows of the curves are found at $K=30$ for HD-GMM, and  at $K=25$ for HD-STM.  Figures~\ref{fig:select} (middle and right) show the different vectors $\db$ obtained for varying $K$. The larger the points, the higher the number of components having the reduced dimension indicated on the $y$-axis. Most of these dimensions are between $5$ and $40$, meaning that the original $M=260$ signal dimension can be reduced by a factor of approximately $7$ to $10$ leading to a reduction from $13$ TB to approximately $1.5$ TB of signals stored in memory.  To check, how much is lost in this reduction, we compute the root mean square error (RMSE) between the dictionary $\yv_i$ signals and their reconstructions $\check{\yv}_i$ from their reduced representations (\ref{eq:post_meank}),
$ RMSE = \sqrt{\frac{1}{N}\sum_{i=1}^N (\yv_i -\check{\yv}_i)^2} \;.$

\begin{figure}[h]
        \includegraphics[width=0.32\textwidth]{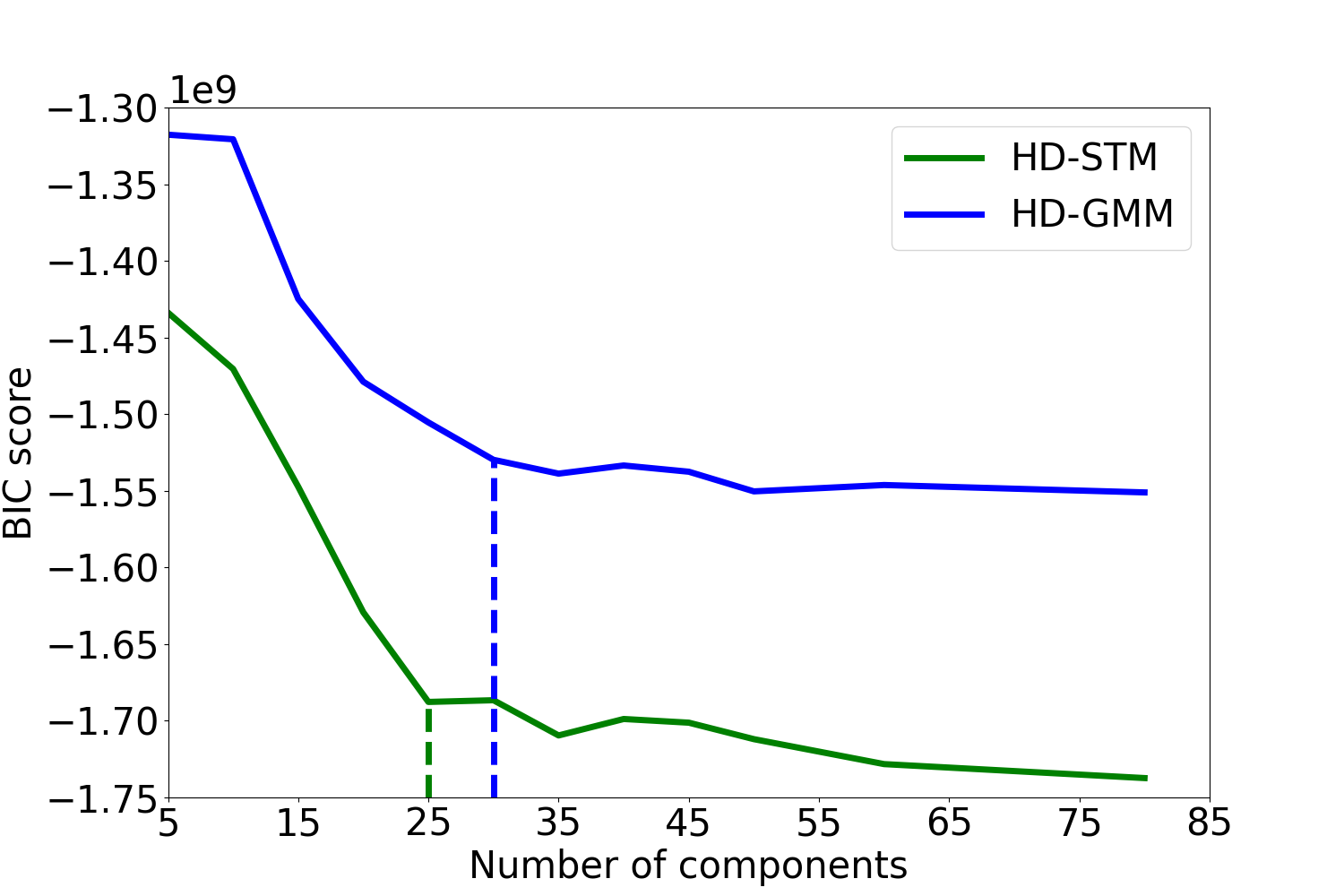}
        \includegraphics[width=0.32\textwidth]{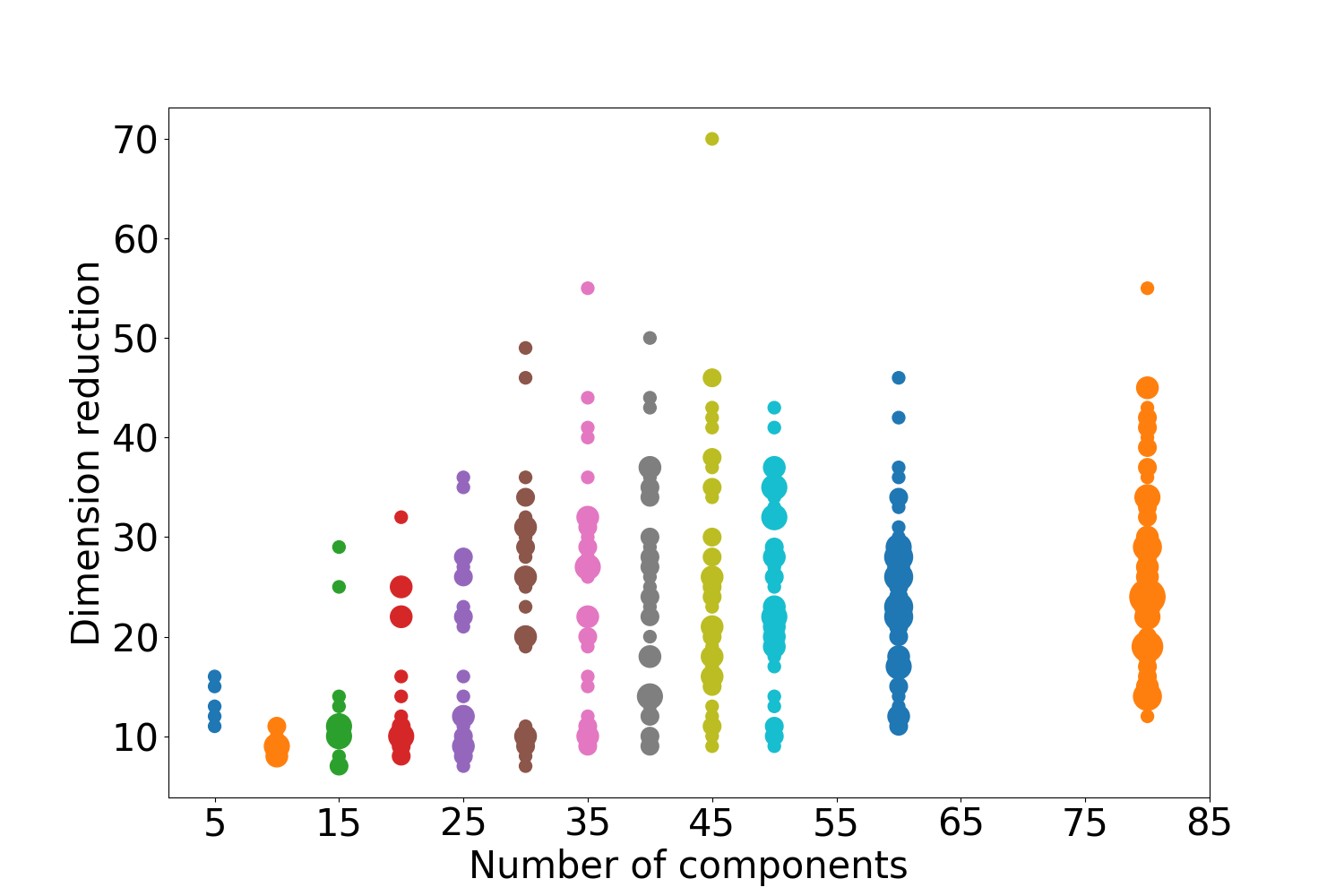}
        \includegraphics[width=0.32\textwidth]{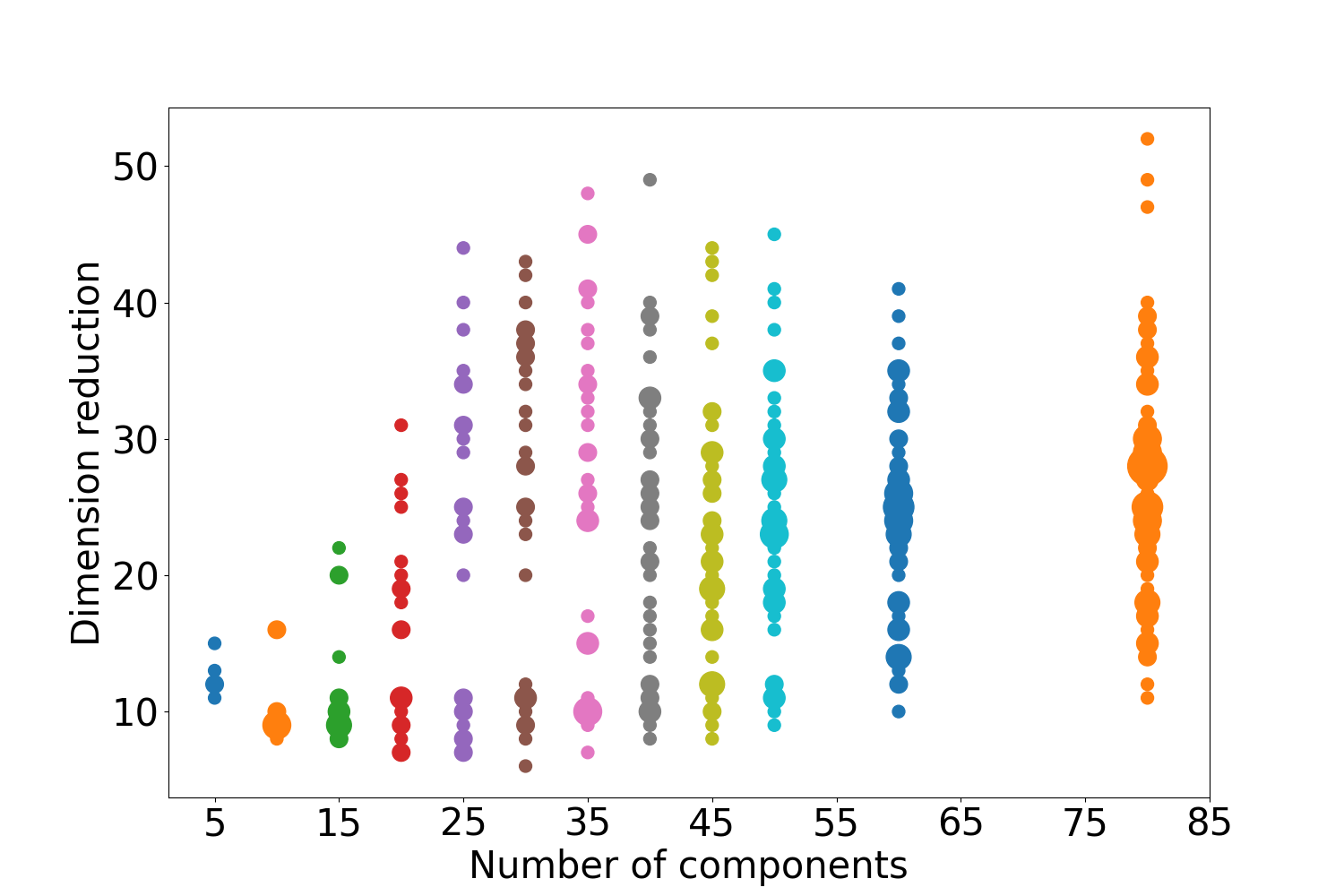}
        \caption{Model selection. BIC scores (left) and clusters dimensions, with respect to the number of components $K$, for HD-GMM (middle) and HD-STM (right) applied to signals of dimension $M=260$. Points sizes reflect the proportion of clusters with a given dimension ($y$-axis).}
        \label{fig:select}
\end{figure}

Figure \ref{fig:err_decompresion} shows that the  RMSE  decreases when $K$ increases and is smaller for HD-STM in particular for small $K$. For the selected $K$ values it represents about 2.5\% of the signal. 

\begin{figure}[h]
    \centering
    \begin{minipage}[t]{0.45\textwidth}
        \centering
        \includegraphics[width=0.95\textwidth]{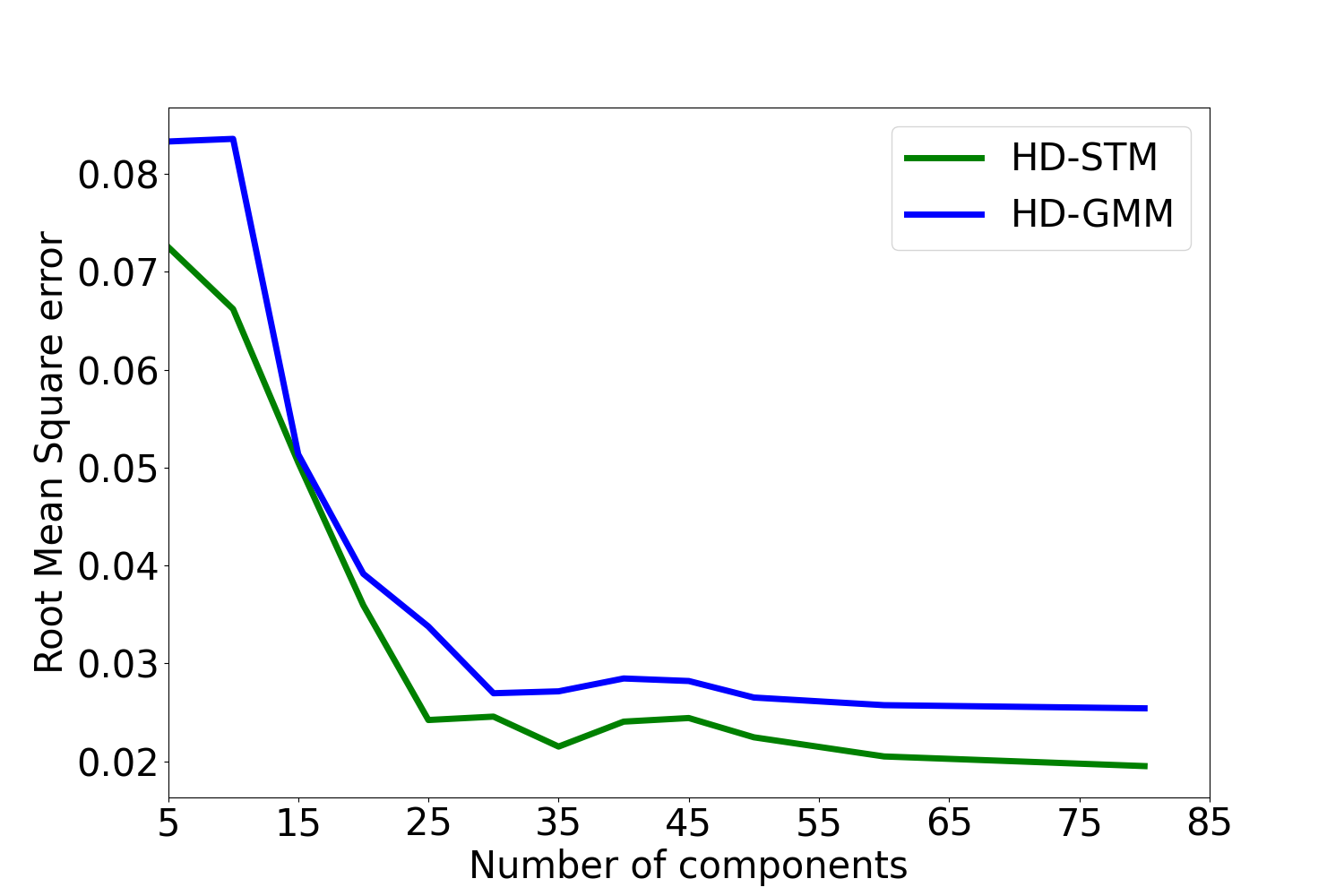}
        \caption{Dictionary reduction loss measured by RMSE over the dictionary signals, for HD-GMM and HD-STM, with respect to the number of components $K$.} 
        \label{fig:err_decompresion}
    \end{minipage}
    \hfill
    \begin{minipage}[t]{0.45\textwidth}
        \centering
        \includegraphics[width=0.80\textwidth]{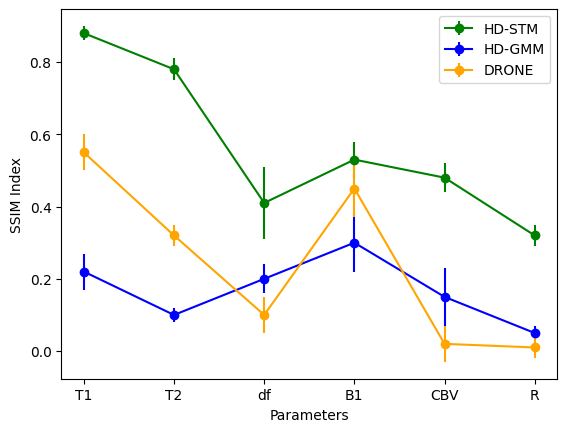}
        \caption{Parameter maps comparison. Average SSIM in [-1,1], the higher the better, and standard deviation over subjects, between HD-GMM, (resp. HD-STM, DRONE) and  full matching maps.}
        \label{fig:ssim}
    \end{minipage}
\end{figure}
Similar plots are shown in Figure 1 of Appendix Section 6 for the Laplace model HD-LM, leading to select $K=30$ and similar cluster dimension values.

\subsection{Tissue and sensitivity parameters reconstruction}
For the six subjects, we compare the parameter maps obtained with our Algorithm~\ref{alg:DC} using HD-GMM, HD-STM and with a MRF state-of-the-art neural network approach DRONE \citep{cohen2018} tailored for $T_1$ and $T_2$, which we adapted for additional parameters, to that obtained with traditional matching, referred to as \textit{full matching} (\textit{i.e.}, matching with uncompressed signals). While \textit{full matching} cannot be considered as the ground truth, it remains the reference method due to its robustness to the highly undersampled MRF acquisitions used in this study, as demonstrated in \citep{coudert2024}, despite its high computational and memory demands. Conventionally, the distance to compare two signals for matching is the Euclidean distance.

Table~\ref{tab:mae} reports for each of the 6 parameters, the distances to  \textit{full matching} parameter estimations, showing the average Mean Absolute Errors ($MAE$) across voxels for all $n_{subject}=6$ subjects and all slices. 
For another comparison that takes into account the image structure, Figure~\ref{fig:ssim} shows the reconstruction quality as measured by the structural similarity index measure (SSIM)  \citep{Wang2004} when compared to full matching, for each parameter and averaged across subjects.  The SSIM is a decimal value between -1 and 1, where 1 indicates perfect similarity, 0 indicates no similarity, and -1 indicates perfect anti-correlation.
$${MAE} = \frac{1}{n_{\text{subject}}} \sum_{s=1}^{n_{\text{subject}}} \frac{1}{\tilde{N}_s} \sum_{j=1}^{\tilde{N}_s} |\tilde{t}_{j,s} - {t}_{j,s}^{match}|,$$
$${SSIM} = \frac{1}{n_{\text{subject}}} \sum_{s=1}^{n_{\text{subject}}} {SSIM}\left ( (\tilde{t}_{j,s})_{1 \leq j \leq \tilde{N}_s}, ({t}_{j,s}^{match} ))_{1 \leq j \leq \tilde{N}_s}\right).$$

\begin{table}[h]
    \caption{Parameter maps reconstruction. Average MAE  and standard deviation, over voxels and subjects, for HD-GMM ($K=30$) HD-STM ($K=25$) and DRONE  with respect to full matching. Best values in bold.}
    \setlength{\tabcolsep}{5pt}
    \begin{tabular}{lccccccc}
        \toprule
        Parameter    & $T_1$ (ms) & $T_2$ (ms) &  $\delta f$ (Hz) & $B_1$ sensitivity ($10^{-3}$) & $\text{CBV}$ ($\%$) & $R$ ($\mu m$) \\
        \midrule
        HD-GMM &  $462 \pm 29$ & $325 \pm 17$ & $10 \pm 1$ & $92 \pm 18$ & $5 \pm 0.8$ & $2 \pm 0.03$ \\
        HD-STM & $\mathbf{78 \pm 14}$ & $\mathbf{5 \pm 2}$ & $\mathbf{6 \pm 1}$ & $\mathbf{40 \pm 5}$ & $\mathbf{1.5 \pm 0.5}$ & $\mathbf{1.6 \pm 0.06}$ \\
        DRONE  & $103 \pm 8$ & $351 \pm 10$ & $7 \pm 1$ & $60 \pm 4$ & $7 \pm 1$ & $4 \pm 0.2$ \\
        \bottomrule
    \end{tabular}
    \vspace{.2cm}
    \label{tab:mae}
\end{table}

HD-STM consistently outperforms HD-GMM and DRONE for all parameters, as evidenced by Figures~\ref{fig:ssim} and Table~\ref{tab:mae}. HD-STM achieves superior SSIM and MAE values, indicating better structural correspondence. This advantage is particularly pronounced for parameters such as $T_1$ and $T_2$, where HD-STM demonstrates robust reconstruction capabilities, whereas HD-GMM, and to a lesser extent DRONE, fail to accommodate too much undersampling noise. However, for parameters like CBV and $R$, HD-STM's performance decreases, suggesting that these parameters are inherently more challenging to model accurately. In these cases, DRONE and HD-GMM yield the lowest SSIM values and the highest MAE values, further underscoring their limitations. Overall, while HD-STM proves to be the more reliable method, the significant variability across subjects—particularly for $\delta \! f$ and CBV—and the observed SSIM drop for $R$ highlight the need for further refinements.

For another quality assessment, Table~\ref{tab:roi} presents the mean and standard deviations of $T_1$, $T_2$, $CBV$, and $R$, computed over voxels in white and grey matter ROIs delineated on $T_1$ maps obtained from \textit{full matching}. Compared to the ranges for healthy subjects reported by \cite{wansapura1999,bjornerud2010,delphin2023}, HD-STM produces values that are both more consistent with expected ranges and closer to those obtained using full matching than to those derived from HD-GMM or DRONE. In particular, with HD-GMM (resp. DRONE), $T_2$ (resp. $R$) values significantly depart from the literature reference, as also visible in Figure~\ref{fig:full_fig}.

The conclusions drawn from Figure~\ref{fig:ssim} and Tables~\ref{tab:mae}–\ref{tab:roi} are further illustrated in Figure~\ref{fig:full_fig}. This figure highlights HD-GMM and DRONE's failure to reconstruct parameters accurately, even when some estimates fall within healthy ranges reported in the literature from Table~\ref{tab:roi}. In contrast, HD-STM consistently delivers parameter estimates for $T_1$, $T_2$, $\delta \! f$, and $B_1$ that are nearly artifact-free and closely aligned with expected values. Additionally, HD-STM effectively captures primary structures and generates homogeneous maps for vascular parameters such as $CBV$ and $R$ even though minor residual \textit{shim} artifacts remain, likely resulting from significant undersampling in the acquired images. Nevertheless, the presence of these artifacts indicates that HD-STM preserves the physical characteristics of the image, rather than suppressing them.
Reconstruction results for HD-LM are shown in Appendix Section 6.1.  HD-LM performs much better than HD-GMM and DRONE but worse than HD-STM.

 \begin{figure}[h]
     \centering
     \includegraphics[width=1\linewidth]{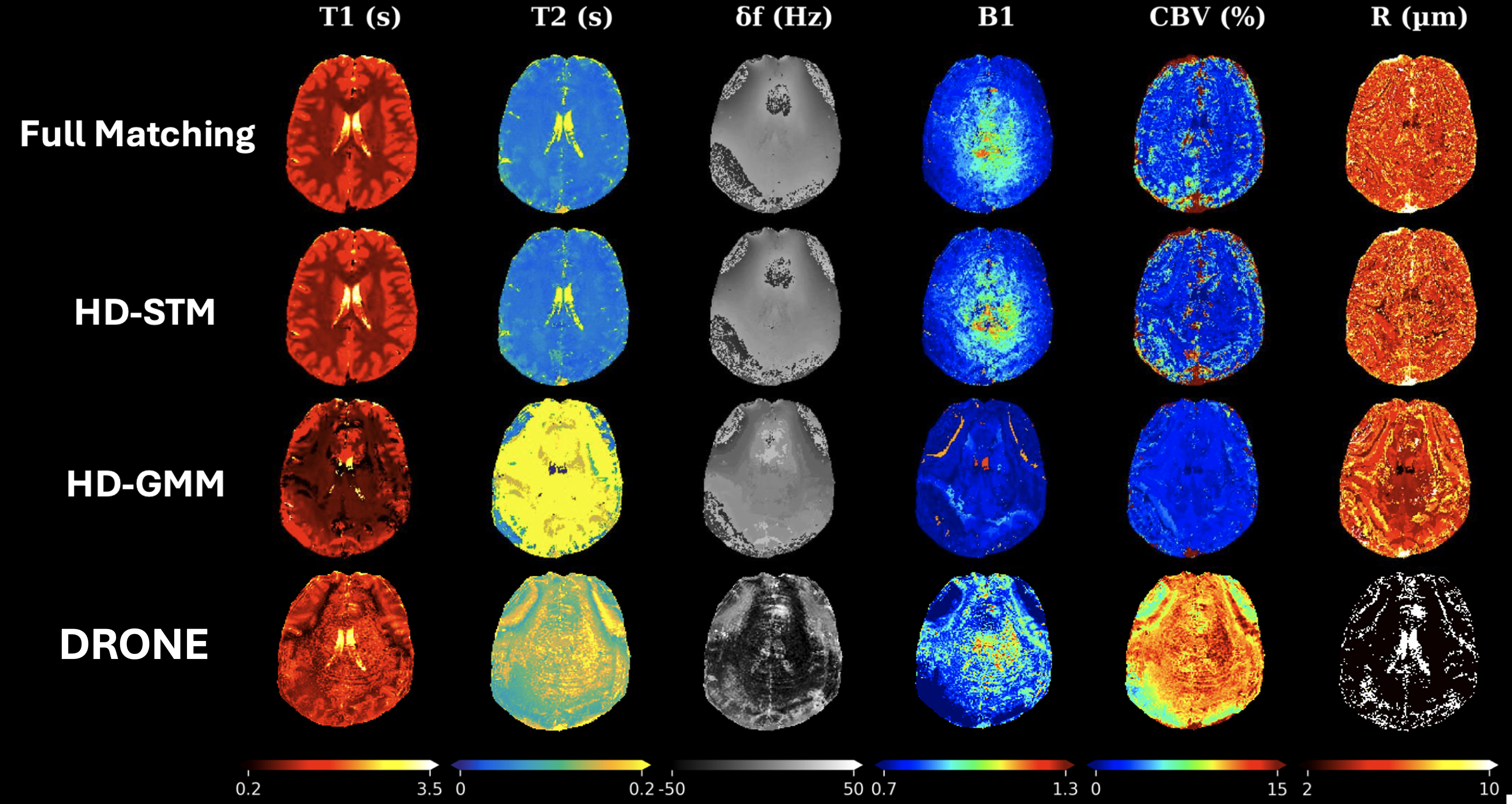}
     \caption{$T_1$, $T_2$, $\delta \! f$, $B_1$ sensitivity, $CBV$, $R$ maps (columns) with different methods (lines) - \textit{full matching}, HD-STM ($K=25$), HD-GMM ($K=30$), and DRONE. All other slices of this subject are shown in Appendix Section 8, Figure 3.}
     \label{fig:full_fig}
 \end{figure}

\begin{table}[h]
    \centering
    \caption{Mean $T_1, T_2, \text{CBV}, R$  and standard deviations,  in white (WM) and grey (GM) matter ROIs. Values significantly departing from literature (last column) are in red. 1: \cite{cohen2018}.2: \cite{WangT2etoile} }
    \begin{tabular}{llccccc}
        \toprule
        Parameter & ROI & Full matching  & HD-GMM & HD-STM & DRONE\footnote{}  & Literature\footnote{} \\
        \midrule 
        \multirow{2}[0]{*}{$T_1$(ms)} & WM &  $891 \pm 6$  & $650 \pm 23 $ & $869 \pm 19$ & $696 \pm 14$ & $690$-$1100$ \\
        & GM &  $1566 \pm 12$ &   $1650 \pm 57$&  $1631 \pm 56$ & $1660 \pm 53$ & 1286-1393\\
        \midrule
        \multirow{2}[0]{*}{$T_2$(ms)} & WM &
        $52 \pm 0.0$ &  \textcolor{red}{$400 \pm 15$} & $48 \pm 2$ & $125 \pm 5$ & 56-80  \\
        & GM & $95 \pm 4$   &  \textcolor{red}{$370 \pm 43$} & $96 \pm 12$ & $135 \pm 24$ & 78-117\\
        \midrule
        \multirow{2}[0]{*}{$\text{CBV}$($\%$)} & WM  & $5.4 \pm 0.6$   &  $2.0 \pm 0.1$ &   $4.8 \pm 0.8$ & $11.5 \pm 0.3$ & $1.7$ - $3.6$ \\
        & GM & $9.2 \pm 1.1$   & $4.2 \pm 0.8$ & $7.6 \pm 1$ & $14.6 \pm 0.6$ & $3.0$ - $8.0$  \\
        \midrule
        \multirow{2}[0]{*}{$R$ ($\mu s$)} & WM  & $5.7 \pm 0.1$   &  $5.6 \pm 0.1$ &   $5.7 \pm 0.1$ &\textcolor{red}{ $0.5 \pm 0.01$} & $6.8 \pm 0.3$ \\
        & GM & $6.2 \pm 0.1$   & $5.4 \pm 0.4$ & $6.1 \pm 0.1$ & \textcolor{red}{$0.6 \pm 0.1$} & $7.3 \pm 0.3$  \\
        \bottomrule
    \end{tabular}
    \vspace{.2cm}
    
    \label{tab:roi}
\end{table}

\subsection{Computation times}
\label{sec:times}
Training is performed on an Nvidia H100-80GB GPU, requiring 26 hours for HD-GMM, 28 hours for HD-STM, and 12 hours for DRONE. 
For HD models, these times do not include initialization, which consists of computing BIC for different $K$  on a subset of signals. The different $K$ values can be run in parallel so that initialization time mainly depends on the maximum $K$ tested and the number of sub-sampled signals. As an illustration, for $K=80$ and $10,000$ signals, initialization takes 15 minutes.
Reconstruction is executed on an Nvidia V100-32GB GPU, taking  3 minutes $\pm$ 2 seconds per slice for both HD-GMM and HD-STM, significantly faster than the 29 minutes $\pm$ 46 seconds required for \textit{full matching}. Due to memory constraints, as the full uncompressed dictionary exceeds the 32GB GPU memory capacity, an incremental matching approach was implemented. Slice reconstruction using DRONE on an Nvidia V100-32GB GPU is notably rapid, completing in under 1 minute. This highlights the efficiency of neural network-based regression methods. However, DRONE exhibits limitations in accurately estimating vascular parameters (Figure \ref{fig:full_fig}). 
%
In more realistic scenarios, calculations are performed locally by a medical practitioner on a CPU. On an Apple M2 Pro CPU, \textit{full matching} takes days and is prone to many memory issues that need to be solved, while the HD-GMM and HD-STM variants require only $4.5$ hours. 
This is a strength of HD models over complex neural network approaches, which cannot be easily adapted to CPU and whose running times would be prohibitive.

\section{Conclusion}

In this work, we combine robust latent variable representations, clustering and incremental learning to propose a tractable and accurate way to represent and handle large volumes of potentially heterogeneous high-dimensional data. To our knowledge, this combination and its use as a data compression strategy is novel.  
The clustering structure of HD-MED allows to address data heterogeneity, for a greater dimensionality reduction without increasing information loss. Incremental learning allows to handle large volumes, resulting in  significant reduction in both computational costs and information losses. 
In terms of implementation, the procedure is flexible and easy to interpret. It depends on two main hyperparameters, the number of clusters $K$ and the vector of reduced dimensions $\db$, that  can be set using  conventional model selection criteria or chosen by the user  to meet resource constraints, {\it e.g.}  increasing $K$ to reduce the size of clusters or decreasing $\db$ to increase compression.

As an illustration, we focus on a Magnetic Resonance Fingerprinting (MRF) application. The proposed method drastically reduces the computational time required on standard hardware, such as CPUs, which are commonly used in clinical environments. Beyond this crucial processing time gain, dimensionality reduction has also an interesting impact on patient data acquisition time, as it helps mitigate the effect of noise in fast-acquired \textit{in vivo} signals. This reduction is thus not only a technical achievement but also a significant step forward in making advanced MRF techniques more accessible and usable in daily clinical practice. Currently, the sequence used to acquire these images is in the pre-clinical testing phase on pathological animal, including those with tumors and stroke. With respect to cross-vendor data, since MRF provides quantitative measurements, there are fewer harmonization issues between scanners from different vendors. However, a potential limitation is the noise, which can differ between scanners; our simulated dictionary originally contains denoised signals, which we artificially noised with different values of the signal-to-noise ratio (SNR), but we did not adjust the noise to exactly match that of the acquisition data, also suggesting a potentially good generalization between acquisitions from different scanners.

In addition, although illustrated with a simple matching procedure, which is the current reference in the target MRF application, our procedure can be coupled with other simulation-based inference (SBI) approaches \citep{Cranmer}, such as approximate Bayesian computation (ABC) or other neural techniques \citep{barrier2024}. 
Like MRF, simulation-based inference  has to face two opposite requirements, which are the need for large and high-dimensional simulated data sets to accurately capture information on the physics under study and the issue of handling such large volumes due to computational resource constraints. 
ABC uses distances between observations and simulations,  and inference is based on these distances. A reduced data representation can  typically be coupled with ABC techniques proposing automatic summary statistics selection such as in \cite{forbes2022,fearnhead2012}. It can also be used with SBI neural approaches, {\it e.g.} \cite{haggstrom2024}, when coupled with a regression model for high-dimensional data, see Algorithm 1 in supplementary material. Compared to matching and standard ABC, regression-based procedures are more amortized, which further reduces the reconstruction time. 
In terms of MRF and other medical imaging applications, future work will thus involve combining HD-MED with a regression model or a neural network, such as  in \cite{Deleforge2015} or  \cite{golbabaee2019}. 
More generally, the possibility to incrementally learn HD-MED models, from high-dimensional heterogeneous data in large volumes, opens the way to a broad range of applications. 
A first type of applications relates to 
anomaly detection using a likelihood score, see  \cite{Lagogiannis2024} for a recent review. 
Anomaly detection can be described as an outlier detection
problem, where the aim is to discriminate between in- and
out-of-distribution samples of a normative distribution learned from a dataset of normal samples. These normal samples can typically be high-dimensional feature vectors summarizing a possibly heterogeneous set of subjects.
Second, inverse problems in an SBI context provide numerous applications other than MRF, such as remote sensing tasks in planetary science, extending, for instance, the work by \cite{Kugler2022}.
 Third, HD-MED  could also simply be used as an alternative to standard PCA to achieve a better low-dimensional representation of datasets of interest, or simply as a clustering technique \citep{bouveyron2014}.

Lastly, none of the mentioned methods makes use of spatial information, to perform map reconstruction. The idea would be to account for voxel proximity to improve or accelerate prediction.
A natural extension of mixture-based methods is to account for spatial information by adding a Markov dependence on the clusters, as done  by \cite{DeleforgeFBH15}.

\begin{acks}[Acknowledgments]
The authors would like to thank the referees, an Associate
Editor and the Editor for their constructive comments that improved the
quality of this paper.
\end{acks}

\begin{funding}[Financial disclosure]
 GO was supported by the AURA region and was granted access to the HPC resources of IDRIS under the allocation 2022-AD011013867 made by GENCI.
\noindent The MRI facility IRMaGe is partly funded by the program “Investissement d’avenir” run by the French National Research Agency, grant “Infrastructure d’avenir en Biologie et Santé” [ANR-11-INBS-0006]. The project is supported by the French National Research Agency [ANR-20-CE19-0030 MRFUSE].
\end{funding}

\begin{supplement}
Supplementary material includes an additional discussion on the pratical benefits of the online EM algorithm. Details on its application to finite mixture models, and in particular to mixtures of Student and Laplace distributions in a high-dimensional settting are also provided. More illustrations on the fingerprinting application can also be found.
The code to run the proposed algorithm and reproduce key results is also accessible on GitHub (\href{https://github.com/geoffroyO/onlineHDEM}{https://github.com/geoffroyO/onlineHDEM}). The repository includes a notebook, \textit{tutorial\_hd\_models.ipynb}, which provides a comprehensive tutorial on how to run the different models with simulated data. However, for regulatory and patient privacy reasons, the data used in this study cannot be made publicly available.
\end{supplement}

\bibliographystyle{imsart-nameyear}
\bibliography{bibliography}

\end{document}


\begin{frontmatter}
\title{Supplementary material\\Scalable magnetic resonance fingerprinting: Incremental inference of high dimensional elliptical mixtures  from  large data volumes}
\runtitle{Scalable magnetic resonance fingerprinting}
\end{frontmatter}

%

\section{Practical benefit of online EM vs batch EM}

Dictionary-based techniques aim at being as amortized as possible, meaning that most of the computational burden is performed offline, ahead of the task of interest, which can then be performed straightforwardly for each new patient. Thus, in principle, the standard batch EM could be executed once, using powerful cloud computing resources, without resorting to a more frugal incremental EM. However, there are several reasons why this may not be the case in practice:
\begin{itemize}
    \item For large  dictionaries, the resources needed for a batch EM are huge, so that not all physicians / hospitals / researchers can afford the  resource and time costs. A useful dictionary exceeds in size several terabytes in memory, which requires an excessive amount of RAM to process efficiently. On typical computer clusters, it is hard to find a single node with sufficient RAM, necessitating the use of multiple nodes and thus, resorting to  implementing  a distributed EM algorithm. The incremental EM setting makes the implementation simpler. 
    \item In addition, building a dictionary can be an experimental design task. For practitioners, the batch EM cost can be detrimental and prevent the necessary repetitive trials to design an appropriate dictionary for a given task. Note that the dictionary considered in the paper is already 400 million  signals for a size of 13 terabytes, which, for 6 parameters,  represents  approximately a grid made with  only 27 values for each parameter.
Although the hope is that a given dictionary be used for several patients, depending on the study, other parameter values may need to be included or changed, so that the dictionary may have to be changed or be completed with more signals. In the latter case, the incremental EM allows either to re-learn a dictionary representation from scratch, without excessive cost, or to pursue the previous learning with the newly added dictionary elements.
    In other words, incremental EM enables to refine a dictionary on the fly by adding more simulated signals—such as transitioning to a finer grid—very easily. We do not need to re-run the complete batch EM algorithm but can simply pass mini-batches of new simulated signals incrementally to the model. 
    \item Finally, beyond computational costs,  incremental/stochastic methods often enable better generalization. This is particularly the case when Stochastic Gradient Descent (SGD) is compared to Gradient Descent (GD) in deep learning \citep{Amir2021}.
    In the case of the EM algorithm, experimental results also show that this can hold. For instance, Figure 1 of \cite{gilman2023bis}
    demonstrates that when only 50\% of the dataset is observed, the incremental EM achieves better results in terms of log-likelihood compared to the batch version of EM (HePPCAT).
\end{itemize}
To summarize, compared to standard EM, the proposed incremental EM allows 1) much more flexible experimental design and investigations, and 2) better learning performance. So doing it broadens the applicability and accuracy of MRF in general, and in particular beyond the settings studied in the paper.

\section{Online EM vs Stochastic Gradient Descent (SGD)}

SGD can be used to estimate mixtures but this is not yet very popular. Both EM and GD approaches have strengths. 
Gradient approaches do not need to compute E and M steps and can thus easily be adapted from one model to another. However, 
although automatic differentiation is a powerful tool, its use when there are constraints on the parameters is not straightforward and requires reparametrization and sometimes over-reparametrization, typically when parameters are proportions that sum to 1, covariance matrices or orthogonal matrices. See, for instance, an elegant reparametrization used by \cite{Hosseini2015} that handle covariance matrices by switching to a Riemannian manifold. Handling orthogonality constraints, like for matrix $\Db$ is likely to be more involved. 
In contrast, incremental EM, which works on sufficient statistics, does not need smoothness assumptions and gradient computations, see also \cite{Dieuleveut2023} for a recent analysis of stochastic non-gradient algorithms.

In addition,  GD/SGD is extremely sensitive to initialization, even more than EM for which initialization can also be important.
EM avoids certain poor local minima that GD/SGD might fall into, as shown in analyses of two-component mixtures \citep{Zhang2020}.
EM convergence is generally faster. EM can be seen as a preconditioned version of GD, see {\it e.g.} \cite{XuJordan1996}.

Another feature is that  SGD is more memory - and time - intensive than incremental EM, primarily because automatic differentiation tools like autodiff construct and store computational graphs to compute gradients, increasing overhead for large-scale, high-dimensional problems. In contrast, incremental EM can often leverage closed-form updates for the E- and M-steps,
avoiding the need for such graphs.

EM-based methods, including incremental versions, are then frequently more robust and efficient for mixture estimation. 
However, \cite{Gepperth2021}
 compare incremental EM with SGD. To allow SGD to escape local optima, they propose an annealing procedure. Their conclusion is that both SGD  and incremental EM perform similarly but SGD can outperform incremental EM in high dimensions. In our work, high dimensionality is handled via our HD-MED model and we chose to benefit from the other strengths of EM.

\section{Online EM for finite mixture models}
This section completes Section 6.1 of the main paper. We demonstrate that the online EM (OEM) algorithm for a finite mixture model can be derived from the OEM algorithm for a single component. We denote the mixture parameters as $\Thetab_\mathcal{M}$, which include the pairs $(\pi_k, \Thetab_k)$ for $k = 1, \dots, K$, where $\Thetab_k$ are the component-wise parameters of the finite mixture model. Furthermore, we consider the pair $(\Yv^T, Z)$, where $Z = 1:K$ is a latent variable such that $\mathbb{P}(Z = k) = \pi_k$.

We assume that the complete-data likelihood for each component can be expressed in the exponential family form, as given by Equation (12) in the main paper, for each $f_c(\yv, z; \Thetab_k)$. Following \cite{NguyenForbes2021}, the complete-data likelihood for the mixture model $f_c(\yv, z, \Thetab_\mathcal{M})$ can be written as:
\begin{align}
    f_c(\yv, z, \Thetab_\mathcal{M}) &= h(\yv, z) \exp \left[ \sum \limits_{\xi =1}^K \left[\log \pi_{\xi} + [\sb(\yv, z)]^T \phib(\Thetab_{\xi}) - \psi(\Thetab_{\xi})\right]\right] \notag \\
    &= h_\mathcal{M}(\yv, z) \exp \left[[\sb_\mathcal{M}(\yv, z)]^T \phib_\mathcal{M}(\Thetab_\mathcal{M}) - \psi_\mathcal{M}(\Thetab_\mathcal{M})\right],
\end{align}

where $h_\mathcal{M} = h$, $\psi_\mathcal{M} = \mathbf{0}$,

\begin{equation}
    \label{eq:formulas_fm_exp}
    \begin{aligned}
    \sb_\mathcal{M}(\yv, z) &= \begin{bmatrix}
           \mathbf{1}_{z=1} \\
           \mathbf{1}_{z=1} \sb(\yv, z) \\
           \vdots \\
           \mathbf{1}_{z=K} \\
           \mathbf{1}_{z=K} \sv(\yv, z)
         \end{bmatrix}, \text{ and } \hspace{.5cm}
    \phib_\mathcal{M}(\thetab_\mathcal{M}) = \begin{bmatrix}
          \log \pi_1 - \psi(\Thetab_1) \\
          \phib(\Thetab_1) \\
          \vdots \\
          \log \pi_K - \psi(\Thetab_K) \\
          \phib(\Thetab_K)
         \end{bmatrix}.
    \end{aligned}
\end{equation}

We can conveniently write the vector $\sb_\mathcal{M}$ that matches the dimension of $\sb_\mathcal{M}(\yv, z)$ as
\begin{equation}
    \sb_\mathcal{M}^T = \left(s_{01},\sb_{\mathcal{M}1}^T, \dots, s_{0K},\sb_{\mathcal{M}K}^T \right),
\end{equation}
where $\sb_{\mathcal{M}z}$ corresponds to one component of the exponential family form of the mixture model distribution.
Via Equation~\eqref{eq:formulas_fm_exp}, the objective function $Q_{\mathcal{M}}$ for the mixture complete-data likelihood as defined in Equation (14) of the main paper can be written as
\begin{equation}
    Q_{\mathcal{M}}(\sb_\mathcal{M}, \Thetab_\mathcal{M}) = \sb_\mathcal{M}^T \phib_\mathcal{M}(\Thetab_\mathcal{M}) = \sum \limits_{z=1}^K s_{0z} \left(\log \pi_z - \psi(\Thetab_z)\right) + \sb_{\mathcal{M}z}^T \phib(\Thetab_z).
\end{equation}
Whatever the form of the component \textit{p.d.f}, the maximisation with respect to $\pi_z$ yields
\begin{equation}
    \label{eq:pi_mix}
    \Bar{\pi}_z(\sb_{\mathcal{M}}) = \frac{s_{0z}}{\sum \limits_{\xi = 1}^K s_{0\xi}}.
\end{equation}
Then, for each $z \in \{1 \dots K\}$,
\begin{align}
    \frac{\partial Q_\mathcal{M}}{\partial \Thetab_z}\left(\sb_\mathcal{M}, \thetab_\mathcal{M}\right) &= -s_{0z}\frac{\partial \psi}{\partial \Thetab_z} (\Thetab_z) + \Jb_{\phib}(\Thetab_z)\sb_{\mathcal{M}z} \notag \\
    &= s_{0z} \left(\Jb_\phib (\Thetab_z) \left[\frac{\sb_{\mathcal{M}z}}{s_{0z}}\right] - \frac{\partial \psi}{\partial \Thetab_z}\right) \notag \\
    &= s_{0z} \frac{\partial Q}{\partial \Thetab_z} \left(\left[\frac{\sb_{\mathcal{M}z}}{s_{0z}}\right], \Thetab_z  \right),
\end{align}
where $Q$ is the objective function of form Equation (14) from main paper corresponding to a single component \textit{p.d.f}. Since $s_{0z}>0$, for all $z \in \{1 \dots K\}$, it follows that the maximization of $Q_\mathcal{M}$ can be conducted by solving
\begin{equation}
    \frac{\partial Q}{\partial \Thetab_z} \left(\left[\frac{s_{\mathcal{M}z}}{s_{0z}}\right], \Thetab_z \right) = \mathbf{0},
\end{equation}
with respect to $\Thetab_z$ for each z. Therefore, it is enough to show that for each component $z$ of the finite mixture model that there exists a root of the equation above denoted as $\Bar{\Thetab}_z(\sb_{\mathcal{M}z}/s_{0z})$ in order to find a solution for the maximizer of the mixture objective $Q_\mathcal{M}$. Then we can set

\begin{equation}
    \Bar{\Thetab}_\mathcal{M}(\sb_\mathcal{M}) = \begin{bmatrix}
        \Bar{\pi}_1 \left(s_\mathcal{M}\right) \\
        \Bar{\Thetab}_1\left(\sb_{\mathcal{M}1} / s_{01}\right) \\
        \vdots \\
        \Bar{\pi}_K \left(s_\mathcal{M}\right) \\
        \Bar{\Thetab}_K\left(\sb_{\mathcal{M}K} / s_{0K}\right)
    \end{bmatrix},
\end{equation}
where the form of $\Bar{\pi}_K \left(s_\mathcal{M}\right)$ is given in \eqref{eq:pi_mix}.

To complete the OEM algorithm, we need to specify the quantity $\Bar{\sb}_{\mathcal{M}}(\yv; \Thetab_\mathcal{M}) = \mathbb{E} \left[\sb_{\mathcal{M}}\left(\Yv, Z\right) | \Yv = \yv;\Thetab_\mathcal{M}\right]$ as defined in Equation (13) of the main paper. Using the definition of $\sb_\mathcal{M}(\yv, z)$ in Equation~\eqref{eq:formulas_fm_exp}, we compute,

\begin{equation}
    r_{k} = \mathbb{E} \left[\mathbf{1}_{\{Z=k\}}| \Yv = \yv; \Thetab_\mathcal{M}\right] = \frac{\pi_k f_k(\yv)}{\sum \limits_{\xi=1}^K \pi_k f_k(\yv)},
\end{equation}
and
\begin{equation}
    \mathbb{E} \left[\mathbf{1}_{\{Z=k\}}\sb(\Yv, Z)| \Yv = \yv; \Thetab_\mathcal{M}\right] = r_k \mathbb{E} \left[\sb(\Yv)| \Yv = \yv , Z=k; \Thetab_\mathcal{M}\right ],
\end{equation}

where $f_k$ is defined in Equation (4) of the main paper as the conditional pdf of the $k^{th}$ mixture component.

Thus, in the finite mixture framework, deriving the OEM algorithm for a single component suffices to deduce the OEM algorithm for the entire mixture. In the next section, we derive the OEM algorithm for a single component of HD-MED. For simplicity, we omit further mention of the form of $h$ in the exponential family representation, as it does not affect the OEM algorithm derivation.

\section{HD-ED exponential family form: Proof of Proposition 6.1}

\begin{proof}
Recall that $\widetilde{\Db}$ denotes the $d$ first columns of $\Db$ completed with zero columns and $\overline{\Db} = \Db- \widetilde{\Db}$. The following multivariate normal distribution can be decomposed as follows,
\begin{gather}
\label{eq:normdecomp}
    \mathcal{N}_M\!\left(\yv; \mub, \frac{\Db \Ab \Db^T}{w}\right) 
    = \prod \limits_{m=1}^{d} \mathcal{N}_1\!\left([\widetilde{\Db}^T(\yv-\mub)]_m; 0, \frac{a_{m}}{w}\right) \prod \limits_{m=d+1}^{M} \mathcal{N}_1\!\left([\overline{\Db}^T(\yv-\mub)]_m; 0, \frac{b}{w}\right), 
    \notag
\end{gather}
where $[\cdot]_m$ denotes the $m^{\text{th}}$ element of the input vector. For $m \leq d$ it comes, 
\begin{equation*}
    \label{eq:vectrick}
    [\widetilde{\Db}^T(\yv-\mub)]^2_m = vec(\db_{m}\db_{m}^T)^Tvec(\yv \yv^T) + vec(\db_{m}\db_{m}^T)^Tvec(\mub \mub^T) - 2\mub^T\db_{m}\db_{m}^T\yv.
\end{equation*}
And for $d < m \leq M$ we have:
\begin{align}
    \label{eq:secpart}
    & \prod \limits_{m=d+1}^{M} \mathcal{N}_1\left([\overline{\Db}^T(\yv-\mub)]_m; 0, \frac{b}{w}\right) \notag 
    &= \frac{1}{(2 \pi b)^{\frac{M-d}{2}}} \exp\left(-\frac{w}{2b}\sum\limits_{m=d+1}^{M} \overline{\Db}^T(\yv-\mub)]^2_m \right).
\end{align}
It is then  possible to remove the dependence  on $\overline{\Db}$ by noting that
\begin{align*}
    \sum \limits_{m=d+1}^{M}[\overline{\Db}^T(\yv-\mub)]^2_m &= ||\yv - P(\yv)||^2 \\ 
    &= \yv^T\yv + \mub^T\mub - 2\yv^T\mub - \sum \limits_{m=1}^{d}[\widetilde{\Db}^T(\yv-\mub)]^2_m,
\end{align*}
using that $P(\yv) = \widetilde{\Db}\widetilde{\Db}^T(\yv-\mub) + \mub $.
Rewriting expressions with the $\operatorname{vec}$ operator and  combining both parts where $m \leq d$ and $d \leq m \leq M$ gives the exponential family form for the complete data likelihood (18) in main paper.
\end{proof}

\section{Gamma-distributed mixing variable}
\label{ssec:gamma}

In Section 6 of the main paper we proposed an online E-step with incomplete calculations that depend on the exponential family form of mixing variable $W$. In this section we propose to specify these calculations when $W$ follows a  Gamma distribution $f_\thetab(w) = \mathcal{G}\left(w; \alpha, \beta\right)$ with the two parameters $\thetab=(\alpha,\beta)$ in $\Rset^{+2}$. More specifically, we need to compute the expectation (19), $\mathbb{E}[W | \Yv ; \Thetab]$, from the main paper, and $\mathbb{E}[s_{w}(W)| \Yv ; \Thetab]$. Then we need to derive the root equations for $\alpha$ and $\beta$. This Gamma-distributed mixing $W$  encompasses several distributions as indicated in Table~\ref{tab:gamma_exemple}.

The Gamma distribution has density $\mathcal{G}\left(w; \alpha, \beta\right) = w^{\alpha-1} \frac{\beta^\alpha e^{-\beta w}}{\Gamma(\alpha)}$ with $\Gamma$ the Gamma function. This distribution is easily put in the exponential family form with
\begin{equation}
    \label{eq:exp_form}
    \begin{aligned}
    \sb_{w}(w) &= \begin{bmatrix}
           w \\
           \log w \\
         \end{bmatrix}, \hspace{.5cm}
    \phib_{w}(\alpha, \beta) = \begin{bmatrix}
          -\beta \\
          \alpha \\
         \end{bmatrix}, \\
         \psi_{w}(\alpha, \beta) &= \log (\Gamma(\alpha)) - \alpha \log (\beta).
    \end{aligned}
\end{equation}

The next proposition gives the necessary formulas to derive the online E-step for an ED with a Gamma-distributed mixing variable.
\begin{proposition}[Online E-step with Gamma mixing variable]
    \begin{align}
        \mathbb{E}\left[W | \Yv=\yv\right] &= \frac{2 \alpha + M}{u + 2 \beta} \\
        \mathbb{E}\left[\operatorname{log}(W) | \Yv=\yv\right] &=  \Psi^{(0)}\left(\frac{2 \alpha + M}{2}\right) - \operatorname{log}\left(\frac{u + 2 \beta}{2}\right),
    \end{align}
with $u= (\yv-\mub)^T\Db\Ab^{-1}\Db^T(\yv-\mub)$ and $\Psi^{(0)}$ the diGamma function defined as the derivative of $\log \Gamma(\cdot)$.
\end{proposition}
\begin{proof}
    We have $(\Yv | W=w)\sim \mathcal{N}_M\left(\mub, \Sigmab / w\right)$ and $W \sim \mathcal{G}\left(\alpha, \beta \right)$, then
    \begin{align*}
        p(w | \Yv= \yv) &\propto w^{\alpha + \frac{M}{2}-1} \exp \left(-w\left(\frac{u}{2} + \beta \right)\right) \\
        &\propto \mathcal{G}\left(w; \alpha + \frac{M}{2}, \frac{u}{2} + \beta \right),
    \end{align*}
which provides the values for both required expectations.
\end{proof}

For the online maximization step, we need to find the solution of the root equation (26) in the main paper. By conveniently writing  $\sb_{w}^T =  [s_{1w} \text{ } s_{2w}]$ and differentiating with respect to $\alpha$ and $\beta$, it comes the  following system of equations,
\begin{align}
    s_{2w} - \Psi^{(0)}(\alpha) + \log \beta &= 0 \label{eq:alpha_eq}\\
    -s_{1w} + \frac{\alpha}{\beta} &= 0 \; . 
\end{align}
Then substituting  $\alpha / s_{1w}$ to $\beta$ in (\ref{eq:alpha_eq}) and under the condition that $s_{1w} \neq 0$, (\ref{eq:alpha_eq}) leads to an equation with $\alpha$ only.

\subsection{Mixture of HD Student-t distributions}
As recalled in  Table~\ref{tab:gamma_exemple}, the Student distribution is an ED with a Gamma-distributed mixing variable with $\alpha = \beta = \frac{\nu}{2}$ where $\nu$ is usually referred to as the degrees of freedom. Previous formulas apply and the root equation for $\nu$ simplifies into,
\begin{equation*}
    s_{2w} - s_{1w} - \Psi^{(0)} \left( \frac{\nu}{2} \right) +  \log \left(\frac{\nu}{2}\right) + 1 = 0.
\end{equation*}

\begin{table}[htbp]
    \centering
    \caption{Some EDs with Gamma mixing variable.}
    \begin{tabular}{cc}
        \hline 
        ED & Mixing variable \\
        \hline 
        Student  & $W \sim \mathcal{G}\left(\frac{\nu}{2}, \frac{\nu}{2}\right)$ \\
        Pearson Type VII & $W \sim \mathcal{G}\left(\alpha - \frac{p}{2}, \frac{1}{2}\right)$ \\
        Generalized Student & $W \sim \mathcal{G}\left(\frac{\nu}{2}, \frac{\alpha}{2}\right)$\\
        Cauchy & $W \sim \mathcal{G}\left(\frac{1}{2}, \frac{1}{2}\right)$\\
        \hline
    \end{tabular}
    \label{tab:gamma_exemple}
\end{table}

\section{Inverse Exponential mixing variable and Laplace distribution}

Another example of widely used distribution is the symmetric multivariate Laplace distribution whose pdf in dimension $M$ is
$$p(\yv; \mub,\Sigmab) = \frac{2}{(2\pi)^{M/2}  (\det\Sigmab)^{1/2}} \left(\sqrt{\frac{u}{2}}\right)^{1-\frac{M}{2}} K_{1-\frac{M}{2}}(\sqrt{2u}) $$
where $\mub$ and $\Sigmab$ are location and scale parameters, $u = (\yv - \mub)^T \Db \Ab^{-1} \Db^T (\yv - \mub)$ is the Mahalanobis distance and  $K_{\cdot}$ is the modified Bessel function of the second kind, see {\it e.g.} Appendix of \cite{Kotz2001}.
This is equivalent to the definition used by \cite{Eltoft2006} by setting  the parameter $\lambda$ therein to 1 and removing the constraint $\det\Sigmab=1$, which for non fixed $\lambda$ is necessary  to guarantee identifiability.

This multivariate distribution, can be written as a GSM with a mixing variable $W$ that follows an inverse exponential distribution, {\it i.e.} $W^{-1}=Z \sim {\cal E}(1)$.
In our notation, the mixing variable $W$ is the inverse of the variable $Z$ used by \cite{Eltoft2006}. It follows that the quantity required for our online EM, namely $\mathbb{E}\left[ W| \Yv=\yv\right]$ can be directly deduced from formula (20) in \citep{Eltoft2006},
\begin{equation}
    \mathbb{E}\left[ W| \Yv=\yv\right] = \sqrt{\frac{2}{u}} \frac{K_{-\frac{M}{2}}\left(\sqrt{2 u}\right)}{K_{-\frac{M}{2} + 1}\left(\sqrt{2 u}\right)}. \label{m1}
\end{equation}
We can then update the sufficient statistics following equation (15) in the main text. 
Applying Proposition 6.3,  estimations for $\mub$, $\Ab$ and $\tilde{\Db}^*$ follow straightforwardly. 

Note that the above moment can also be obtained by starting from the expression of the multivariate Laplace distribution as a GSM
\begin{align*}
    p(\yv;\mub,\Sigmab) &= \int \limits_{\mathbb{R}^+}\mathcal{N}_M \left( \yv; \mub;z \Sigmab \right)  e^{-z}dz \\
    &= \int \limits_{\mathbb{R}^+}\mathcal{N}_M \left( \yv; \mub; \frac{\Sigmab}{w} \right) w^{-2}  e^{-\frac{1}{w}}dw, 
\end{align*}
and deriving that  
\begin{align*}
    p(w | \Yv = \yv) \propto w^{\frac{M}{2} - 2} e^{-\frac{1}{2}w u -\frac{1}{w}}
    \propto \operatorname{GIG}\left(w; \frac{M}{2} - 1, u, 2\right),
\end{align*}
where  
 $\operatorname{GIG}$ is the pdf of the Generalized Inverse Gaussian distribution, using the parametrization of \cite{Jorgensen1982}.
The moments of the GIG are known and given, for $r \in \Rset$, by \citep{Jorgensen1982},
\begin{equation*}
    \mathbb{E}\left[ W^r| \Yv=\yv\right] = \left(\sqrt{\frac{2}{ u}}\right)^r \frac{K_{\frac{M}{2} -1+r}\left(\sqrt{2 u}\right)}{K_{\frac{M}{2} - 1}\left(\sqrt{2 u}\right)}.
\end{equation*}
For $r=1$ it comes  (\ref{m1}), using the fact that the modified Bessel functions satisfy $K_p = K_{-p}$. 
In practice we use \cite{takekawa2022fast} to make efficient computation of the ratio $\frac{K_{\cdot + 1} (\cdot)}{K_{\cdot} (\cdot)}$ in Python-Jax.

\subsection{MRF reconstruction with High-Dimensional Mixture of Multivariate Laplace distributions (HD-LM)}
We run the experiment to reconstruct the MRF signals using the same data and procedure, from Section 7 of the main paper. Training the model multiple times with different numbers of components and computing  BIC gives an optimal number of classes of $K=30$. Figure~\ref{fig:select_hdlm} shows the BIC values and the different dimensions of the sub-spaces, with respect to the number of components. Table~\ref{tab:mae_ssim_hdlm} shows the average MAE and SSIM over the 6 subjects, when compared to full matching. 
The results show clearly that the Laplace mixture model HD-LM performs much better than HD-GMM and DRONE but less than HD-STM.
\begin{figure}[h]
        \includegraphics[width=0.4\textwidth]{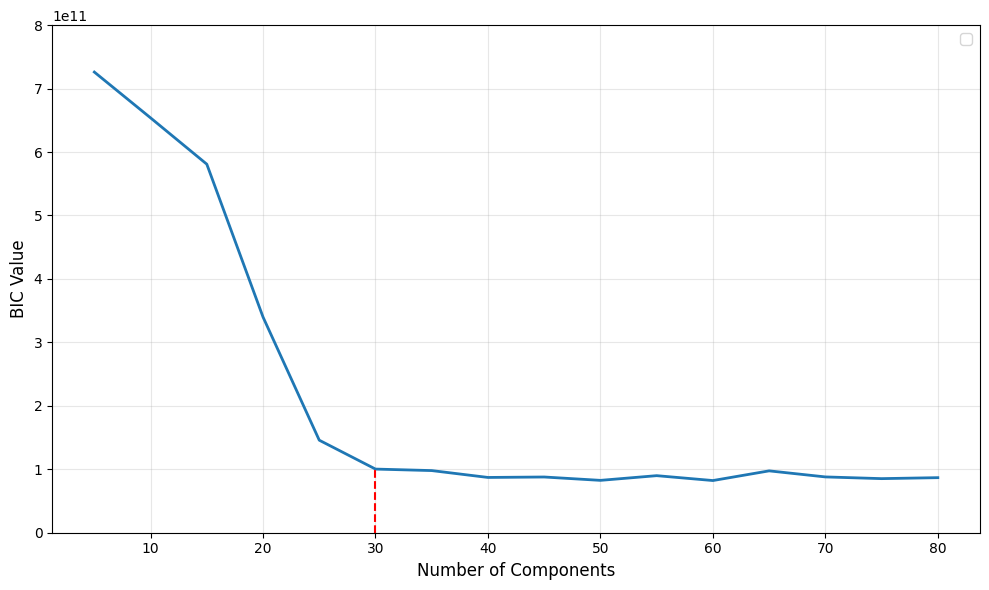}
        \includegraphics[width=0.25\textwidth]{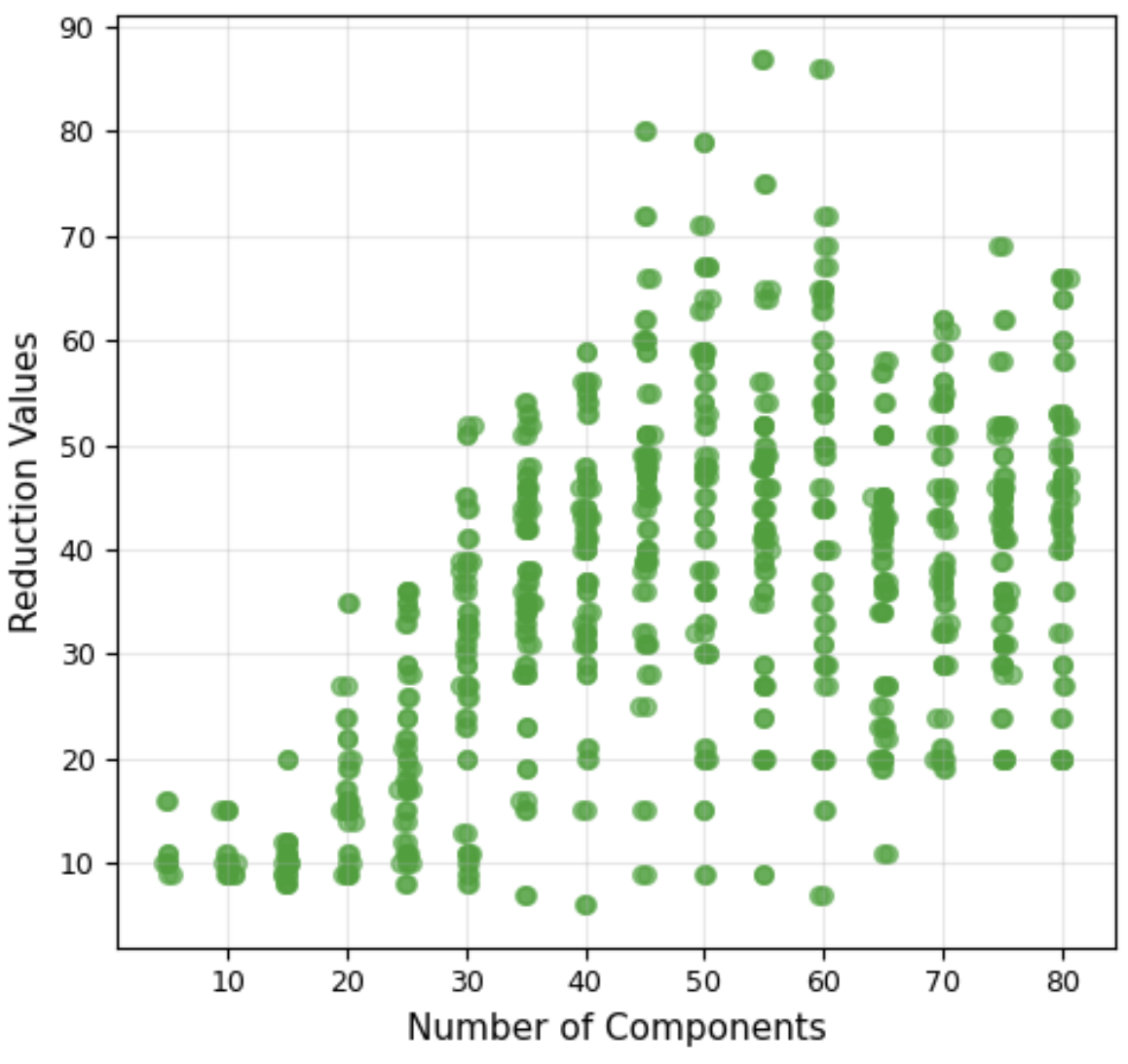}
        \caption{Model selection. BIC scores (left) and clusters dimensions (right), with respect to the number of components $K$, for HD-LM  applied to signals of dimension $M=260$.}
        \label{fig:select_hdlm}
\end{figure}

\begin{table}[h]
\centering
\caption{Parameter maps reconstruction with HD-LM. Average MAE, SSIM and standard deviations, over voxels and subjects with respect to full-matching.}
\label{tab:mae_ssim_hdlm}
\begin{tabular}{lcccccc}
\hline
 & $T_1$ (ms) & $T_2$ (ms) & $\delta  f$ (Hz) & $B_1$ sensitivity ($10^{-3}$) & $CBV$ (\%) & R ($\mu m$) \\
\hline
MAE & $84 \pm 18$ & $12 \pm 3$ & $7 \pm 1$ & $45 \pm 7$ & $2.3 \pm 1$ & $1.9 \pm 0,2$ \\
SSIM & $0.8 \pm 0.02$ & $0.76 \pm 0.12$ & $0.4 \pm 0.05$ & $0.53 \pm 0.02$ & $0.38 \pm 0.07$ & $0.25 \pm 0.09$ \\
\hline
\end{tabular}
\end{table}

Finally, Figure~\ref{fig:recon_hdlm} shows  HD-LM reconstructions in comparison to full matching, for the different parameters and one slice of a subject. We observe some noise at the top of the $T_2$ map.

\begin{figure}[h]
    \centering
    \includegraphics[width=0.8\linewidth]{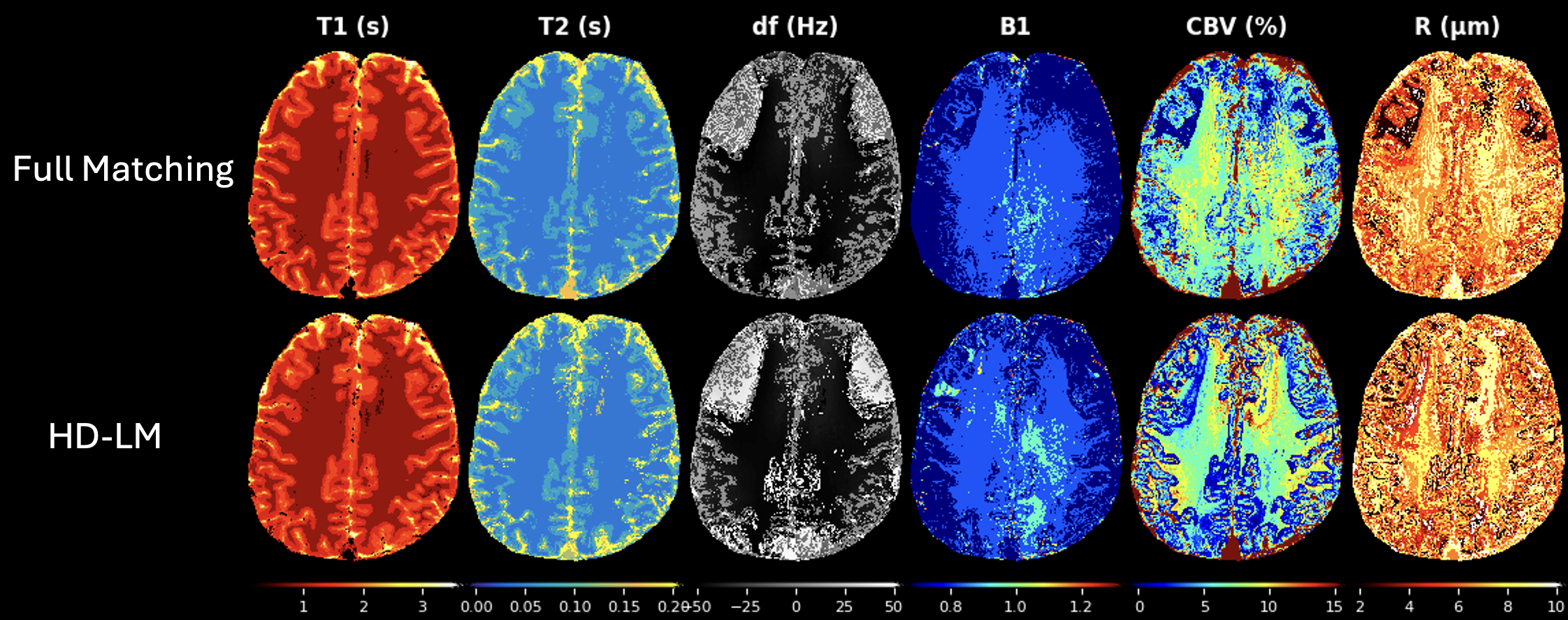}
    \caption{$T_1$, $T_2$, $\delta \! f$, $B_1$ sensitivity, $CBV$, $R$ maps (columns) with  \textit{full matching} (top) and HD-LM with $K=30$ (bottom)}
    \label{fig:recon_hdlm}
\end{figure}

\section{Divide \& Conquer high dimensional inverse regression of large data volumes}

As mentioned in the main paper conclusion, the presented data compression procedure could also be combined with a regression model in place of the matching step. Algorithm \ref{alg:DCR} below is a variant of Algorithm 1 in the paper, illustrating this possibility. Step 1 remains the same, Step 2 corresponds to the estimation of regression functions for each cluster and is new, Step 3 is the cluster-wise prediction of new signals using the $K$ estimated regression functions. More specifically, step 3.1 consists of projecting an observation $\widetilde{\yv}_j$ on each of the $K$ subspaces and of computing the probability of $\widetilde{\yv}_j$ to be assigned to each cluster. Step 3.2 then corresponds to a cluster-wise prediction when 
only the regression function for the most probable cluster is used, {\it i.e.} $\widetilde{\tv}_j = \widetilde{\tv}_j^k $ with $k =\arg\max\limits_l r_l(\widetilde{\yv}_j)$. Step 3.2 is written in a slightly more general manner using ${\cal L}$ to denote a function that would combine, in a way to be chosen, all predictions and all assignment probabilities. 

\begin{algorithm}[h]\scriptsize
    \caption{\small Divide \& Conquer high dimensional inverse regression of large data volumes}
    \label{alg:DCR}
    \begin{algorithmic}[1]
     \item[ ] {\bf Input} Data set of simulated (signal, parameters) pairs ${\cal D}_f = \{\yv_i, \tv_i\}_{i=1:N}$, $N>>1$, $\tv_i \in \Rset^L$, $\yv_i \in \Rset^M$, $M >> 1$.\\
    $\qquad$ Observed signals $\{\widetilde{\yv}_j\}_{j=1:\tilde{N}}$, $\widetilde{\yv}_j \in \Rset^M$. 
     \begin{tcolorbox}[colback=orange!10,colframe=orange!50!black, boxrule=0.01pt,boxsep=2pt, top=1pt, bottom=1pt, title style={halign=flush right,before upper=\strut}]
        \vspace{1pt}
        \STATE {\bf Reduced dimension representation of simulated data:} $\{\hat{\yv}_i, \tv_i, r.(\yv_i)\}_{i=1:N} $.
        
        \begin{tcolorbox}[colback=cyan!10,colframe=cyan!50!black, boxrule=0.01pt,boxsep=2pt, top=1pt, bottom=1pt, title style={halign=flush right,before upper=\strut}]
            \item[1.1] {\bf Online HD-MED inference from $\{\yv_i\}_{i=1:N}$:} $K$ clusters, $d_k<M$ for $k= 1:K$  $\quad \Longrightarrow$  cluster assignment probabilities and cluster-wise projections $(\rb,\Qb) =\{r_k(\cdot), Q_k(\cdot)\}_{k=1:K}$.
        \end{tcolorbox}

        \begin{tcolorbox}[colback=magenta!10,colframe=magenta!50!black, boxrule=0.01pt,boxsep=2pt, top=1pt, bottom=1pt, title style={halign=flush right,before upper=\strut}]
            \item[1.2] {\bf Cluster-wise signal reduction:}  $\qquad \{\yv_i\}_{i=1:N}, \rb, \Qb \quad \Longrightarrow \quad 
            \{ \widehat{\yv}_i=Q_k(\yv_i), i \in I_k\}\qquad $ with
            $\qquad I_k=\{i, \; s.t  \; k=\arg\max\limits_\ell r_\ell(\yv_i)\}$, for $k=1:K$.
        \end{tcolorbox}
    \end{tcolorbox}
     \begin{tcolorbox}[colback=cyan!10,colframe=cyan!50!black, boxrule=0.01pt,boxsep=2pt, top=1pt, bottom=1pt, title style={halign=flush right,before upper=\strut}]
            \STATE {\bf Cluster-wise regressions:} for $k=\!1\!:\!K$,  learn a  
           regression function $l_k:\Rset^{d_k} \rightarrow \Rset^L$  from  $\{ (\tv_i,\widehat{\yv}_i), \; i \in I_k\}$ .
    \end{tcolorbox}

        \begin{tcolorbox}[colback=green!10,colframe=cyan!50!black, boxrule=0.01pt,boxsep=2pt, top=1pt, bottom=1pt, title style={halign=flush right,before upper=\strut}]
            \STATE {\bf Cluster-wise predictions for observed signals:} 
             \begin{tcolorbox}[colback=cyan!10,colframe=cyan!50!black, boxrule=0.01pt,boxsep=2pt, top=1pt, bottom=1pt, title style={halign=flush right,before upper=\strut}]
            \item[3.1] {\bf Cluster-wise observed signal reduction:} Use learned $(\rb, \Qb)$ from \textit{step 1.1}  to obtain $\{Q_k(\widetilde{\yv}_j), r_k(\widetilde{\yv}_j),  j= 1:\tilde{N}, k= 1:K \}$
            \end{tcolorbox}
            \begin{tcolorbox}[colback=magenta!10,colframe=magenta!50!black, boxrule=0.01pt,boxsep=2pt, top=1pt, bottom=1pt, title style={halign=flush right,before upper=\strut}]
            \item[3.2]{\bf Cluster-wise prediction:} For $j= 1:\tilde{N}, k= 1:K$, compute $\widetilde{\tv}_j^k = l_k(Q_k(\widetilde{\yv}_j))$
            $\Longrightarrow \quad$ predict 
            $\widetilde{\tv}_j = {\cal L}(\widetilde{\tv}_{j}^1, r_{1}(\widetilde{\yv}_j), \ldots,\widetilde{\tv}_{j}^K, r_{K}(\widetilde{\yv}_j) )\quad $ {\it e.g.}  $\widetilde{\tv}_j = \widetilde{\tv}_j^k $ with $k =\arg\max\limits_l r_l(\widetilde{\yv}_j)$.
            \end{tcolorbox}
            
        \end{tcolorbox}
        \item[ ] {\bf Return}  Predicted parameters
        $ \{\widetilde{\tv}_j\}_{j=1:\tilde{N}}$, $\widetilde{\tv}_j \in \Rset^L$
    \end{algorithmic}
\end{algorithm}


\section{Application to MRF reconstruction}

We provide additional illustration by showing in Figure \ref{fig:addrecons} the parameter maps obtained with full matching and HD-STM, for all the slices of the subject presented in main Figure 6. Both methods, HD-STM and full matching provide visually similar reconstructions. 

\begin{figure}[h]
    \centering
    \begin{minipage}[t]{0.8\textwidth}
        \centering
        \includegraphics[width=\textwidth]{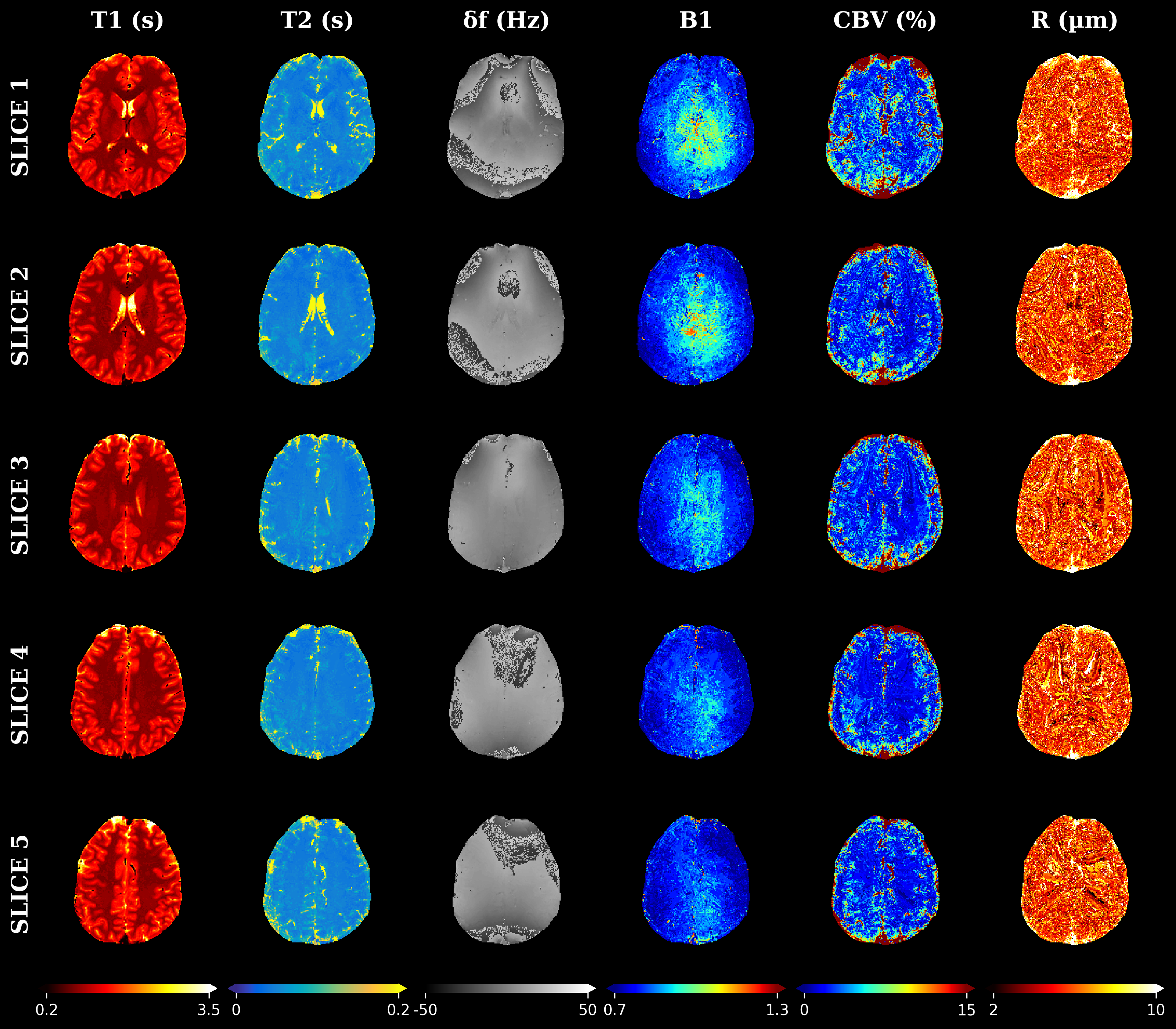}
    \end{minipage}
    \begin{minipage}[t]{0.8\textwidth}
        \centering
        \includegraphics[width=\textwidth]{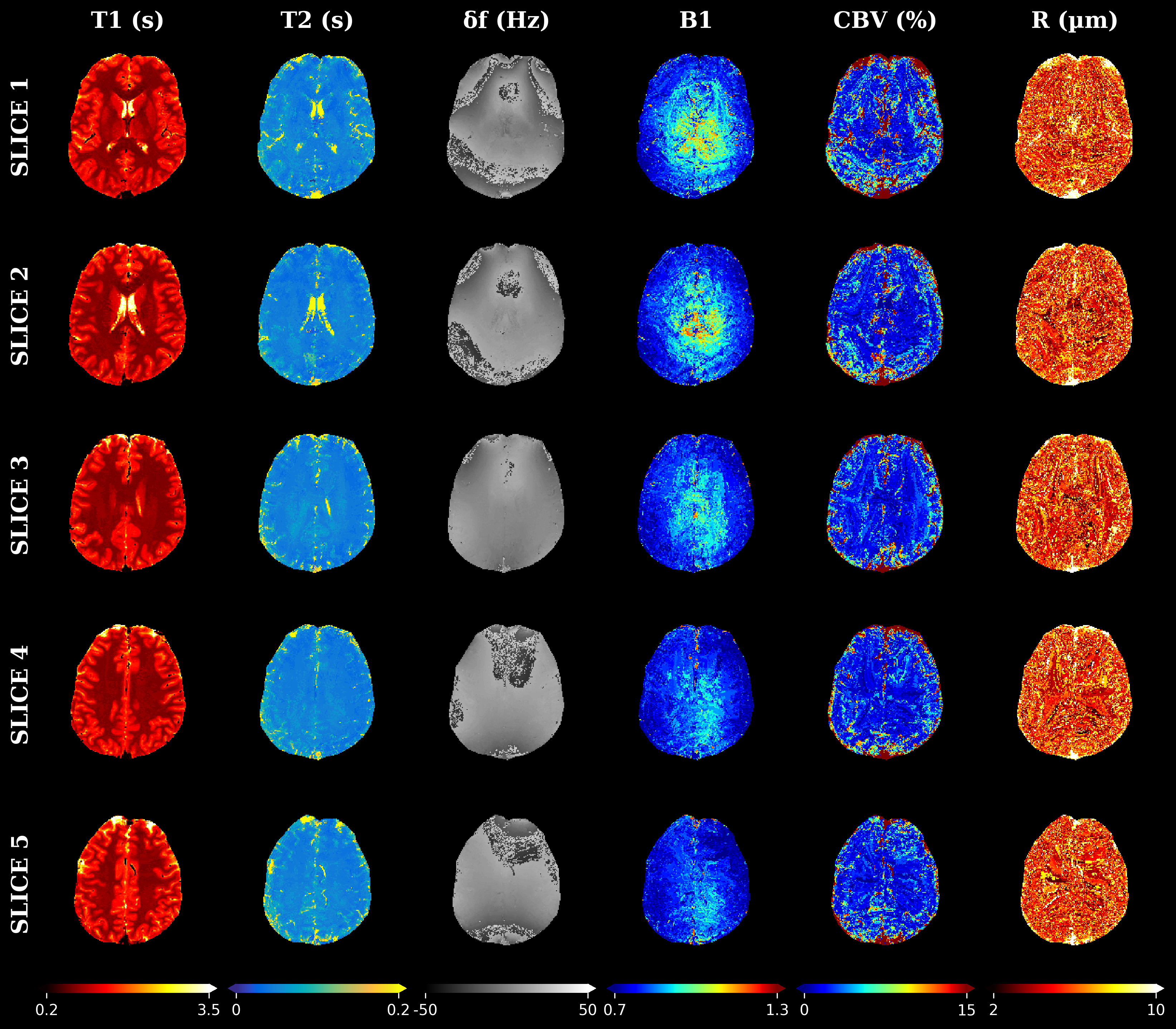}
    \end{minipage}
    \caption{$T_1$, $T_2$, $\delta \! f$, $B_1$ sensitivity, $CBV$, $R$ maps (columns) with different methods- \textit{full matching} (top), HD-STM with $K=25$ (bottom) for all the slices of the subject presented in Figure 6 of the main paper. \label{fig:addrecons}}

\end{figure}

\section{Generalized scale mixtures of Gaussian distributions}
In the main text, we focused on the case where the covariance matrix of each Gaussian component is scaled by a factor of $\frac{1}{w}$. However, as proposed in \cite{lee2014}, it is possible to generalize Gaussian scale mixtures by using a more flexible link function $k(\cdot)$, leading to the formulation:

\begin{equation}
    p(\yv) = \int \limits_{\Rset^+}  \mathcal{N}_M(\yv; \mub, k(w)\Sigmab) f(w) \diff w.
\end{equation}

This generalization encompasses a wider class of distributions, as summarized in Table 1 of \cite{lee2014}. A possible continuation of our work is to  derive properties from the main paper for this more general case, while still applying the proposed parcimonious parameterization of high dimensional scale matrices. It is noteworthy that we retain the ability to define a latent variable model for this extended class of distributions. Using the same notation, the following proposition holds for this new class, denoted by $\mathcal{HE}$ as in the main paper:

\begin{proposition}
Let $d \leq M-1$, $\Yv \in \Rset^M$,  $\Xv \in \Rset^d$, $\Eb \in \Rset^M$, $W \in \Rset^+$ be random variables,  $\Vb \in \Rset^{M \times d}$ a matrix of linearly independent columns, $\mub \in \Rset^M$ a vector, a link function $k: \Rset \xrightarrow{}\Rset$ and $f_\thetab$ the pdf of a positive univariate random variable defined by some parameter $\thetab$. Assume that
\begin{align*}
\Yv &= \Vb \Xv + \mub + \Eb \\
(\Xv | W=w) &\sim {\cal N}(\mathbf{0}_d, k(w) \Ib_d) \\
(\Eb | W=w) &\sim {\cal N}(\mathbf{0}_M, b \; k(w) \Ib_M) \\
W &\sim f_\thetab
\end{align*}
then, $$\Yv \sim \mathcal{HE}_{Md} \left(\mub, \widetilde{\Db}^*, \ab, b, \thetab \right),$$ 
with $b \Ib_M + \Vb \Vb^T = \Db \Ab \Db^T$ and $\Ab = \operatorname{diag}(a_1, \dots, a_{d}, b, \dots, b)$.

Additionally, denoting by $\Ub = b \Ib_d + \Vb^T \Vb $, we have, 
\begin{align}
(\Xv | \Yv=\yv, W=w) &\sim {\cal N}( \Ub^{-1}\Vb^T(\yv - \mub), k(w) b \Ub^{-1}) \label{eq:condX} \; .
\end{align}
It follows that $\mathbb{E}[\Yv | \Xv=\xv ] =  \Vb\xv +\mub$ and 
$\mathbb{E}[\Xv | \Yv=\yv ] =  \Ub^{-1}\Vb^T(\yv - \mub)\; .$
\end{proposition}

\begin{proof}
    The proof follows the same structure as Proposition 5.1 in the main paper, with the scaling factor $\frac{1}{w}$ replaced by the link function $k(w)$.
\end{proof}

Regarding the OEM part, we can similarly derive propositions for this new class of distributions, assuming that $\Wv$ follows a distribution in the exponential family. For all link functions $k(w)$, most calculations involve replacing the scaling factor $\frac{1}{w}$ with $k(w)$. This opens the possibility of studying high-dimensional finite mixtures of distributions such as the slash, variance gamma, and Laplace distributions.

\bibliographystyle{imsart-nameyear}
\bibliography{bibliography}